\newcommand{\R}{\mathbb{R}}
\newcommand{\I}{\mathbb{1}}
\newcommand{\N}{\mathcal{N}}
\newcommand{\Hs}{\mathcal{H}_Z}
\renewcommand{\L}{\mathcal{L}}
\newcommand{\A}{\mathbf{A}}
\newcommand{\B}{\mathbf{B}}
\newcommand{\C}{\mathbf{C}}
\newcommand{\D}{\mathbf{D}}
\newcommand{\X}{\mathbf{X}}
\newcommand{\Y}{\mathbf{Y}}
\newcommand{\Z}{\mathbf{Z}}
\newcommand{\K}{\mathbf{K}}
\newcommand{\U}{\mathbf{U}}
\newcommand{\V}{\mathbf{V}}
\newcommand{\W}{\mathbf{W}}
\newcommand{\Pb}{\mathbf{P}_Z}
\newcommand{\M}{\mathbf{M}}
\renewcommand{\P}{\mathbf{P}}
\newcommand{\Q}{\mathbf{Q}}
\newcommand{\Rm}{\mathbf{R}}
\newcommand{\T}{\mathbf{T}}
\newcommand{\ebf}{\boldsymbol\epsilon}
\newcommand{\Gbf}{\boldsymbol\Gamma}
\DeclareMathOperator*{\argmin}{argmin}
\DeclareMathOperator{\diag}{diag}
\DeclareMathOperator{\cov}{Cov}
\DeclareMathOperator{\tr}{tr}
\DeclareMathOperator{\rnk}{rank}
\newtheorem{thm}{Theorem}
\newtheorem{prop}{Proposition}
\newtheorem{lem}{Lemma}
\newcommand{\lvreml}{\textsc{lvreml}}
\newcommand{\panama}{\textsc{panama}}
\newcommand{\limix}{\textsc{limix}}
\newcommand{\peer}{\textsc{peer}}
\newcommand{\lvurl}{\url{https://github.com/michoel-lab/lvreml}}
\newcommand{\panurl}{\url{https://github.com/limix/limix-legacy}}
\renewcommand\thesection{\arabic{section}}
\title{Restricted maximum-likelihood method for learning latent variance components in gene expression data with known and unknown confounders}
\author{Muhammad Ammar Malik and Tom Michoel$^*$}
\begin{document}

%\maketitle

\begin{titlepage}
  \chapter*{\centering Restricted maximum-likelihood method for learning latent variance components in gene expression data with known and unknown confounders}

\begin{center}
  {\Large Muhammad Ammar Malik and Tom Michoel$^*$}
\end{center}

\medskip

Computational Biology Unit, Department of Informatics, University of Bergen, PO Box 7803, 5020 Bergen, Norway

$^*$ Corresponding author, email: \texttt{tom.michoel@uib.no}

\section*{\centering Abstract}

Random effect models are popular statistical models for detecting and correcting spurious sample correlations due to hidden confounders in genome-wide gene expression data. In applications where some confounding factors are known, estimating simultaneously the contribution of known and latent variance components in random effect models is a challenge that has so far relied on numerical gradient-based optimizers to maximize the likelihood function. This is unsatisfactory because the resulting solution is poorly characterized and the efficiency of the method may be suboptimal. Here we prove analytically that maximum-likelihood latent variables can always be chosen orthogonal to the known confounding factors, in other words, that maximum-likelihood latent variables explain sample covariances not already explained by known factors. Based on this result we propose a restricted maximum-likelihood method which estimates the latent variables by maximizing the likelihood on the restricted subspace orthogonal to the known confounding factors, and show that this reduces to probabilistic PCA on that subspace. The method then estimates the variance-covariance parameters by maximizing the remaining terms in the likelihood function given the latent variables, using a newly derived analytic solution for this problem. Compared to gradient-based optimizers, our method attains greater or equal likelihood values, can be computed using standard matrix operations, results in latent factors that don't overlap with any known factors, and has a runtime reduced by several orders of magnitude. Hence the restricted maximum-likelihood method  facilitates the application of random effect modelling strategies for learning latent variance components to much larger gene expression datasets than possible with current methods.
	
\end{titlepage}

\section{Introduction}
\label{sec:introduction}

Following the success of genome-wide association studies (GWAS) in mapping the genetic architecture of complex traits and diseases in human and model organisms \cite{mackay2009genetics,hindorff2009potential,manolio2013bringing}, there is now a great interest in complementing these studies with molecular data to understand how genetic variation affects epigenetic and gene expression states \cite{albert2015role,franzen2016,gtex2017genetic}. In GWAS, it is well-known that population structure or cryptic relatedness among individuals may lead to confouding that can alter significantly the outcome of the study \cite{astle2009population}. When dealing with molecular data, this is further exacerbated by the often unknown technical or environmental influences on the data generating process. This problem is not confined to population-based studies -- in single-cell analyses of gene expression, hidden subpopulations of cells and an even greater technical variability cause significant expression heterogeneity that needs to be accounted for \cite{buettner2015computational}.

In GWAS, linear mixed models have been hugely successful in dealing with confounding due to population structure \cite{yu2006unified, astle2009population, kang2010variance, lippert2011fast, zhou2012genome}. In these models, it is assumed that an individual's trait value is a linear function of fixed and random effects, where the random effects are normally distributed with a covariance matrix determined by the genetic similarities between individuals, hence accounting for confounding in the trait data. Random effect models have also become popular in the correction for hidden confounders in gene expression data \cite{kang2008accurate, listgarten2010correction, fusi2012joint}, generally outperforming approaches based on principal component analysis (PCA), the singular value decomposition or other hidden factor models \cite{leek2007capturing,stegle2010bayesian, stegle2012using}. In this context, estimating the latent factors and the sample-to-sample correlations they induce on the observed high-dimensional data is the critical problem to solve.

If it is assumed that the observed correlations between samples are entirely due to latent factors, it can be shown that the resulting random effect model is equivalent to probabilistic PCA, which can be solved analytically in terms of the dominant eigenvectors of the sample covariance matrix \cite{tipping1999probabilistic,lawrence2005probabilistic}. However, in most applications, some confounding factors are known in advance (e.g.\ batch effects, genetic factors in population-based studies, or cell-cycle stage in single-cell studies), and the challenge is to estimate simultaneously the contribution of the known as well as the latent factors. This has so far relied on the use of numerical gradient-based quasi-Newton optimizers to maximize the likelihood function \cite{fusi2012joint,buettner2015computational}. This is unsatisfactory because the resulting solution is poorly characterized,  the relation between the known and latent factors is obscured, and due to the high-dimensionality of the problem, ``limited memory'' optimizers have to be employed whose theoretical convergence guarantees are somewhat weak \cite{liu1989limited,lin2017generic}.

Intuitively, latent variables should explain sample covariances not already explained by known confounding factors. Here we demonstrate analytically that this intuition is correct: latent variables can always be chosen orthogonal to the known factors without reducing the likelihood or variance explained by the model. Based on this result we propose a method that is conceptually analogous to estimating fixed and random effects in linear mixed models using the restricted maximum-likelihood (REML) method, where the variance parameters of the random effects are estimated on the restricted subspace orthogonal to the maximum-likelihood estimates of the fixed effects \cite{gumedze2011parameter}. Our method, called \lvreml, similarly estimates the latent variables by maximizing the likelihood on the restricted subspace orthogonal to the known factors, and we show that this reduces to probabilistic PCA on that subspace. It then estimates the variance-covariance parameters by maximizing the remaining terms in the likelihood function given the latent variables, using a newly derived analytic solution for this problem. Similarly to the REML method for conventional linear mixed models, the \lvreml\ solution is not guaranteed to maximize the total likelihood function. However we prove analytically that for any given number $p$ of latent variables, the \lvreml\ solution attains minimal unexplained variance among all possible choices of $p$ latent variables, arguably a more intuitive and easier to understand criterion.

The inference of latent variables that explain observed sample covariances in gene expression data is usually pursued for two reasons. First, the latent variables, together with the known confounders, are used to construct a sample-to-sample covariance matrix that is used for downstream estimation of variance parameters for individual genes and improved identification of trans-eQTL associations \cite{fusi2012joint,stegle2012using}. Second, the latent variables are used directly as ``endophenotypes'' that are given a biological interpretation and whose genetic architecture is of stand-alone interest \cite{parts2011joint,stegle2012using}. This study contributes to both objectives. First, we show that the covariance matrix inferred by \lvreml\ is \emph{identical} to the one inferred by gradient-based optimizers, while computational runtime is reduced by orders of magnitude (e.g.\ a $10^4$ speed-up on gene expression data from 600 samples). Secondly, latent variables inferred by \lvreml\ by design do not overlap with already known covariates and thus represent \emph{new} aggregate expression phenotypes of potential interest. In contrast, we show that existing methods infer latent variables that overlap significantly with the known covariates (cosine similarities of upto 30\%) and thus represent partially redundant expression phenotypes.

% It is important to emphasize that our solution pertains to the estimation of the overall sample-to-sample covariance matrix to be used in the LMM for expression heterogeneity, and not to the downstream estimation of variance parameters for individual genes given this covariance matrix. The latter remains a non-convex optimization problem for which numerical optimizers are required \cite{lippert2011fast,zhou2012genome}.

\section{Results}
\label{sec:results}

\subsection{Restricted maximum-likelihood solution for a random effect model with known and latent variance components}
\label{sec:exact-solut-line}

Our model to infer latent variance components in a gene expression data matrix is the same model that was popularized in the  \panama  \cite{fusi2012joint} and scLVM \cite{buettner2015computational} softwares, where a linear relationship is assumed between expression levels and the known and latent factors, with random noise added (SI Section~\ref{sec:model}). In matrix notation, the model can be written as
\begin{equation}\label{eq:38}
  \Y = \Z \V + \X \W + \ebf,
\end{equation}
where $\Y\in\R^{n\times m}$ is a matrix of gene expression data for $m$ genes in $n$ samples, and $\Z\in\R^{n\times d}$ and $\X\in\R^{n\times p}$ are matrices of values for $d$ known and $p$ latent confounders in the same $n$ samples. The columns $v_i$ and $w_i$ of the random matrices $\V\in\R^{d\times m}$ and $\W\in\R^{p\times m}$  are the effects of the known and latent confounders, respectively, on the expression level of gene $i$, and are assumed to be jointly normally distributed:
\begin{equation*}
  p\left(
  \begin{bmatrix}
    v_i\\ w_i
  \end{bmatrix}
  \right) = \N\left(0,
    \begin{bmatrix}
      \B & \D\\
      \D^T & \A
    \end{bmatrix}\right) 
\end{equation*}
where $\B\in\R^{d\times d}$, $\A\in\R^{p\times p}$ and $\D\in\R^{d\times p}$ are the covariances of the known-known, latent-latent and known-latent confouder effects, respectively. Lastly, $\ebf\in\R^{n\times m}$ is a matrix of independent samples of a Gaussian distribution with mean zero and variance $\sigma^2$, independent of the confounding effects.

Previously, this model was considered with independent random effects ($\B$ and $\A$ diagonal and $\D=0$) \cite{fusi2012joint,buettner2015computational}. As presented here, the model is more general and accounts for possible lack of independence between the effects of the known covariates. Furthermore, allowing the effects of the known and latent factors to be dependent ($\D\neq 0$) is precisely what will allow the latent variables to be orthogonal to the known confounders (SI Section \ref{sec:solution-full-model}). An equivalent model with $\D=0$ can be considered, but requires non-orthogonal latent variables to explain part of the sample covariance matrix, resulting in a mathematically less tractable framework. Finally, it remains the case that we can always choose $\A$ to be diagonal, because the latent factors have an inherent rotational symmetry that allows any non-diagonal model to be converted to an equivalent diagonal model (SI Section~\ref{sec:solut-model-no-known}). By definition, the known covariates correspond to measured or ``natural'' variables, and hence they have no such rotational symmetry.

Using standard mixed-model calculations to integrate out the random
effects (SI Section~\ref{sec:model}), the log-likelihood of the unknown
model parameters given the observed data can be written as
\begin{equation}\label{eq:2}
  \L\bigl(\X,\A,\B,\sigma^2\mid \Y, \Z) = - \log\det(\K) - \tr\bigl(\K^{-1} \C),
\end{equation}
where
\begin{equation}\label{eq:3}
  \K = \Z\B\Z^T + Z\D\X^T + \X\D^T\Z^T + \X\A\X^T + \sigma^2 \I
\end{equation}
and $\C=(\Y\Y^T)/m$ is the sample covariance matrix. Maximizing the log-likelihood (\ref{eq:2}) over positive definite matrices $\K$ without any further constraints would result in the estimate $\hat\K=\C$ (note that $\C$ is invertible because we assume that the number of genes $m$ is greater than the number of samples $n$) \cite{anderson1985maximum}.

 If $\K$ is constrained to be of the form $\K=\X\A\X^T + \sigma^2 \I$ for a given number of latent factors $p<n$, then the model is known as probabilistic PCA, and the likelihood is maximized by identifying the latent factors with the eigenvectors of $\C$ corresponding to the $p$ largest eigenvalues \cite{tipping1999probabilistic,lawrence2005probabilistic}. In matrix form, the probabilistic PCA solution can be written as
\begin{equation}\label{eq:37}
  \hat\K = \P_1\C\P_1 + \hat\sigma^2 \P_2,
\end{equation}
 where $\P_1$ and $\P_2$ are mutually orthogonal projection matrices on the space spanned by the first $p$ and last $n-p$ eigenvectors of $\C$, respectively, and the maximum-likelihood estimate $\hat\sigma^2$ is the average variance explained by the $n-p$ excluded dimensions (SI Section~\ref{sec:solut-model-no-known}). 

If $\K$ is constrained to be of the form $\K=\Z\B\Z^T + \sigma^2 \I$, the model is a standard random effect model with the same design matrix $\Z$ for the random effects $v_i$ for each gene $i$. In general, there exists no analytic solution for the maximum-likelihood estimates of the (co)variance parameter matrix $\B$  in a random effect model \cite{gumedze2011parameter}. However in the present context it is assumed that the data for each gene are an independent sample of the \emph{same} random effect model. Again using the fact that  $\C=(\Y\Y^T)/m$ is invertible due to the number of genes being greater than the number of samples, the maximum-likelihood solution for $\B$, and hence $\K$, can be found analytically in terms of $\C$ and the singular value decomposition (SVD) of $\Z$. It turns out to be of the same form (\ref{eq:37}), except that $\P_1$ now projects onto the subspace spanned by the known covariates (the columns of $\Z$) (SI Section \ref{sec:solut-model-no-latent}).

In the most general case where $\K$ takes the form (\ref{eq:3}), we show first that every model of the form (\ref{eq:38}) can be rewritten as a model of the same form where the hidden factors are orthogonal to the known covariates, $\X^T\Z=0$. The reason is that any overlap between the hidden and known covariates can be absorbed in the random effects $v_i$ by a linear transformation, and therefore simply consists of a reparameterization of the covariance matrices $\B$ and $\D$ (SI Section~\ref{sec:solution-full-model}). Once this orthogonality is taken into account, the log-likelihood (\ref{eq:2}) decomposes as a sum $\L=\L_1+\L_2$, where $\L_2$ is identical to the log-likelihood of probabilistic PCA on the \emph{reduced space} that is the orthogonal complement to the subspace spanned by the known covariates (columns of $\Z$). Analogous to the restricted maximum-likelihood (REML) method for ordinary linear mixed models, where variance parameters of the random effects are estimated in the subspace orthogonal to the maximum-likelihood estimates of the fixed effects \cite{patterson1971recovery,gumedze2011parameter}, we estimate the latent variables $\X$ by maximizing only the likelihood term $\L_2$ corresponding to the subspace where these $\X$ live (Section~\ref{sec:solution-full-model}). Once the restricted maximum-likelihood estimates $\hat\X$ are determined, they become ``known'' covariates, allowing the covariance parameter matrices to be determined by maximizing the remaining terms $\L_1$ in the likelihood function using the analytic solution for a model with known covariates $(\Z,\;\hat\X)$ (Section~\ref{sec:solution-full-model}).

By analogy with the REML method, we call our method the restricted maximum-likelihood method for solving the latent variable model (\ref{eq:38}), abbreviated ``\lvreml''.  While the \lvreml\ solution is not guaranteed to be the absolute maximizer of the total likelihood function, it is guaranteed analytically that for any given number $p$ of latent variables, the \lvreml\ solution attains minimal unexplained variance among all possible choices of $p$ latent variables (SI Section~\ref{sec:solution-full-model}). 

\subsection{\lvreml, a flexible software package for learning latent variance components in gene expression data}
\label{sec:lvreml-flex-softw}

\begin{figure}
  \centering
  \includegraphics[width=\linewidth]{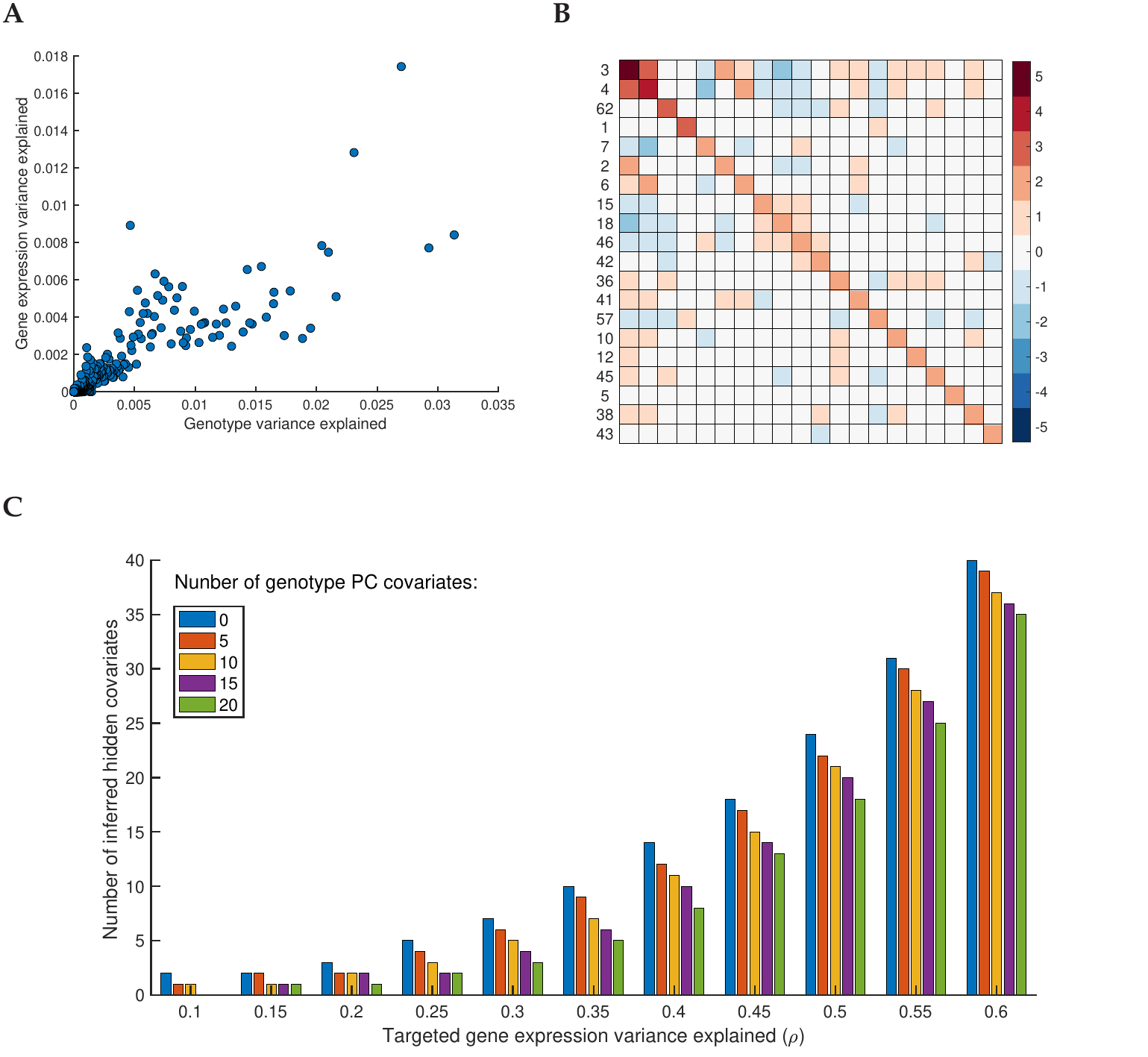}
  \caption{\textbf{A.} Gene expression variance explained by individual genotype PCs in univariate models vs.\ their genotype variance explained. \textbf{B.} Heatmap of the estimated covariance matrix $\B$ (cf.\ eq.~\eqref{eq:3}) among the effects on gene expression of the top 20 genotype PCs (by gene expression variance explained in univariate models, cf.\ panel A, y-axis); the row labels indicate the genotype PC index, ranked by genotype variance explained (cf.\ panel A, x-axis).  \textbf{C.} Number of hidden covariates inferred by \lvreml\ as a function of the parameter $\rho$ (the targeted total amount of variance explained by the known and hidden covariates), with $\theta$ (the minimum variance explained by a known covariate) set to retain 0, 5, 10, or 20 known covariates (genotype PCs) in the model. For visualization purposes only the range of $\rho$ upto $\rho=0.6$ is shown, for the full range, see Supp.\ Fig.\ \ref{fig:num_hidden_rho_supp}.}
  \label{fig:numcovar-theta}
\end{figure}

We implemented the restricted maximum-likelihood method for solving model (\ref{eq:38}) in a software package \lvreml, available with Matlab and Python interfaces at \lvurl. \lvreml\ takes as input a gene expression matrix $\Y$, a covariate matrix $\Z$, and a parameter $\rho$, with $0<\rho<1$. This parameter is the desired proportion of variation in $\Y$ that should be explained by the combined known and latent variance components. Given $\rho$, the number of latent factors $p$ is determined automatically (SI Section~\ref{sec:select-covar-latent}). \lvreml\ centres the data $\Y$ such that each sample has mean value zero, to ensure that no fixed effects on the mean need to be included in the model (SI Section~\ref{sec:syst-effects-mean}).

When the number of known covariates (or more precisely the rank of $\Z$) exceeds the number of samples, as happens in eQTL studies where a large number of SNPs can act as covariates \cite{fusi2012joint}, a subset of $n$ linearly independent covariates will always  explain \emph{all} of the variation in $\Y$. In \cite{fusi2012joint}, a heuristic approach was used to select covariates during the likelihood optimization, making it difficult to understand \textit{a priori} which covariates will be included in the model and why. In contrast, \lvreml\ includes a function to perform initial screening of the covariates, solving for each one the model (\ref{eq:38}) with a single known covariate to compute the variance $\hat\beta^2$ explained by that covariate alone (SI Section~\ref{sec:solut-model-no-latent}). This estimate is then used to include in the final model only those covariates for which $\hat\beta^2\geq\theta\tr(\C)$, where  $\theta>0$ is the second free parameter of the method, namely the minimum amount of variation  a known covariate needs to explain on its own to be included in the model (SI Section~\ref{sec:select-covar-latent}). In the case of genetic covariates, we further propose to apply this selection criterion not to individual SNPs, but to principal components (PCs) of the genotype data matrix. Since PCA is a linear transformation of the genotype data, it does not alter model (\ref{eq:38}). Moreover, selecting PCs as covariates ensures that the selected covariates are linearly independent, and is consistent with the fact that genotype PCs are known to reveal population structure in expression data \cite{brown2018expression}.

To test  \lvreml\ and illustrate the effect of its parameters, we used genotype data for 42,052 genetic markers and RNA sequencing expression data for 5,720 genes in 1,012 segregants from a cross between two strains of budding yeast \cite{albert2018genetics}, one of the largest (in terms of sample size),  openly available eQTL studies in any organism (Methods Section~\ref{sec:data}). We first performed PCA on the  genotype data.  The dominant genotype PCs individually explained 2-3\% of variation in the genotype data, and 1-2\% of variation in the expression data, according to the single-covariate model (SI Section~\ref{sec:solut-model-no-latent}, eq.~(\ref{eq:34})) (Fig.~\ref{fig:numcovar-theta}A). Although genotype PCs are orthogonal by definition, their effects on gene expression are not independent, as shown by the non-zero off-diagonal entries in the maximum-likelihood estimate of the covariance matrix $\B$ (cf.\ \eqref{eq:3}) (Fig.~\ref{fig:numcovar-theta}B). To illustrate how the number of inferred hidden covariates varies as a function of the input parameter $\rho$, we determined values of the parameter $\theta$ to include between 0 and 20 genotype PCs as covariates in the model. As expected, for a fixed number of known covariates, the number of hidden covariates increases with $\rho$, as more covariates are needed to explain more of the variation in $\Y$, and decreases with the number of known covariates, as fewer hidden covariates are needed when the known covariates already explain more of the variation in $\Y$ (Fig.~\ref{fig:numcovar-theta}C).

When setting the parmater $\theta$, or equivalently, deciding the number of known covariates to include in the model, care must be taken due to a mathematical property of the model: the maximizing solution exists only if the minimum amount of variation in $\Y$ explained by a known covariate (or more precisely, by a principal axis in the space spanned by the known covariates) is greater than the maximum-likelihood estimate of the residual variance $\hat\sigma^2$ (see Theorems~\ref{thm:no-latent} and \ref{thm:full} in SI Sections~\ref{sec:solut-model-no-latent} and \ref{sec:solution-full-model}).  If non-informative variables are included among the known covariates, or known covariates are strongly correlated, then the minimum variation explained by them becomes small, and potentially smaller than the residual variance, whose initial ``target'' value is $1-\rho$. Because \lvreml\ considers the known covariates as fixed, it lowers the value of $\hat\sigma^2$ by including more hidden covariates in the model, until the existence condition is satisfied. In such cases, the total variance explained by the known and hidden covariates will be greater than the target value of the input parameter $\rho$. Visually, the presence of non-informative dimensions in the linear subspace spanned by the known covariates (due to non-informative or redundant variables) is shown by a saturation of the number of inferred hidden covariates with decreasing $\rho$ (Supp.\ Fig.\ \ref{fig:num_hidden_rho_supp}B), providing a clear cue that the relevance or possible redundancy of (some of) the known covariates for explaining variation in the expression data needs to be reconsidered.

\subsection{\lvreml\  attains likelihood values higher than or equal to \panama}
\label{sec:panama-hidd-fact}

To compare the analytic solution of \lvreml\ against the original model with gradient-based optimization algorithm, as implemented in the \panama\ software \cite{fusi2012joint}, we performed a controlled comparison where  0, 5, 10, and 20 dominant principal components (PCs) of the  expression data $\Y$ were used as artificial known covariates. Because of the mathematical properties of the model and the \lvreml\ solution, if the first $d$ expression PCs are included as known covariates, \lvreml\ will return the next $p$ expression PCs as hidden factors. Hence the log-likelihood of the \lvreml\ solution with $d$ expression PCs as known covariates and $p$ hidden factors will coincide with the log-likelihood of the solution with zero known covariates and $d+p$ hidden factors (that is, probabilistic PCA with  $d+p$ hidden factors). Fig.~\ref{fig:log-like}A shows that this is the case indeed: the log-likelihood curves for 0, 5, 10, and 20 PCs as known covariates are shifted horizontally by a difference of exactly 5  (from 0, to 5, to 10) or 10 (from 10 to 20) hidden factors.

\begin{figure}
  \centering
  \includegraphics[width=\linewidth]{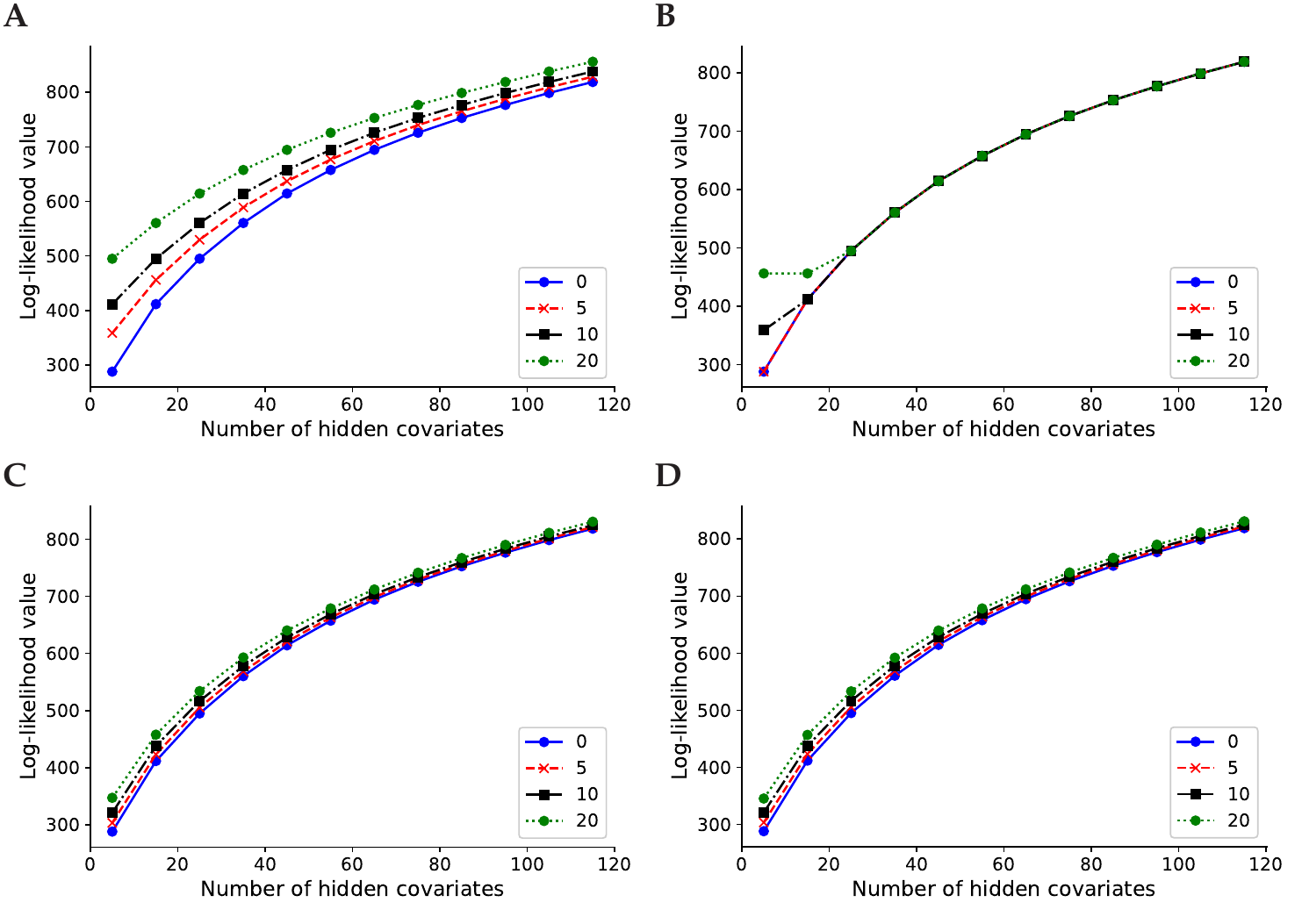}
  \caption{Log-likelihood values for \lvreml\ (\textbf{A,C}) and \panama\ (\textbf{B,D}) using 0, 5, 10, and 20 principal components of the expression data (\textbf{A},\textbf{B}) or genotype data (\textbf{C},\textbf{D}) as known covariates. The results shown are for 600 randomly subsampled segregants; corresponding results for 200, 400, and in the case of \lvreml\, 1,012 segregants are shown in Supp.\ Fig.\ \ref{fig:likelihood_supp}.}
  \label{fig:log-like}
\end{figure}

In contrast, \panama\ did not find the optimal shifted probabilistic PCA solution, and its likelihood values largely coincided with the solution with zero known covariates, irrespective of the number of known covariates provided (Fig.~\ref{fig:log-like}B). In other words, \panama\ did not use the knowledge of the known covariates to explore the orthogonal space of axes of variation not yet explained by the known covariates, instead arriving at a solution where $p$ hidden factors appear to explain no more of the variation than $p-d$ PCs orthogonal to the $d$ known PCs. To verify this, we compared the \panama\ hidden factors to PCs given as known covariates, and found that in all cases where the curves in Fig.~\ref{fig:log-like}B align, the first $d$ hidden factors coincided indeed with the $d$ known covariates (data not shown).

When genotype PCs were used as known confounders (using the procedure explained in Section~\ref{sec:lvreml-flex-softw}), the shift in log-likelihood values was less pronounced, consistent with the notion that the genotype PCs explain less of the expression variation than the expression PCs. In this case the likelihood values of \lvreml\ and \panama\ coincided  (Fig.~\ref{fig:log-like}C,D), indicating that both methods found the same optimal covariance matrix. 

The explanation for the difference between Fig.~\ref{fig:log-like}A and C is as follows. In Fig.~\ref{fig:log-like}A,  \lvreml\ uses $p$ hidden covariates to explain the same amount of variation as $d+p$ expression PCs. The dominant expression PCs are partially explained by population structure (genotype data). Hence when $d$ genotype PCs are given as known covariates, \lvreml\  infers $p$ orthogonal latent variables that explain the ``missing'' portions of the expression PCs not explained by genotype data. This results in a model that explains more expression variation than the $p$ dominant expression PCs, but less than $p+d$ expression PCs, hence the reduced shift in Fig.~\ref{fig:log-like}C. 

It is unclear why \panama\ did not find the correct solution when expression PCs were used as known covariates (Fig.~\ref{fig:log-like}B), but this behaviour was consistent across multiple subsampled datasets of varying sizes (Supp.\ Fig.\ \ref{fig:likelihood_supp}) as well as in other datasets (data not shown).

\subsection{\panama\ and \peer\ infer hidden factors that are partially redundant with the known covariates}

Although \panama\ inferred models with the same covariance matrix estimate $\hat \K$ (cf.\ Section \ref{sec:exact-solut-line}) and hence the same likelihood values as \lvreml\ when genotype PCs where given as known covariates, the inferred hidden covariates differed between the methods.

\begin{figure}
  \centering
 \includegraphics[width=\linewidth]{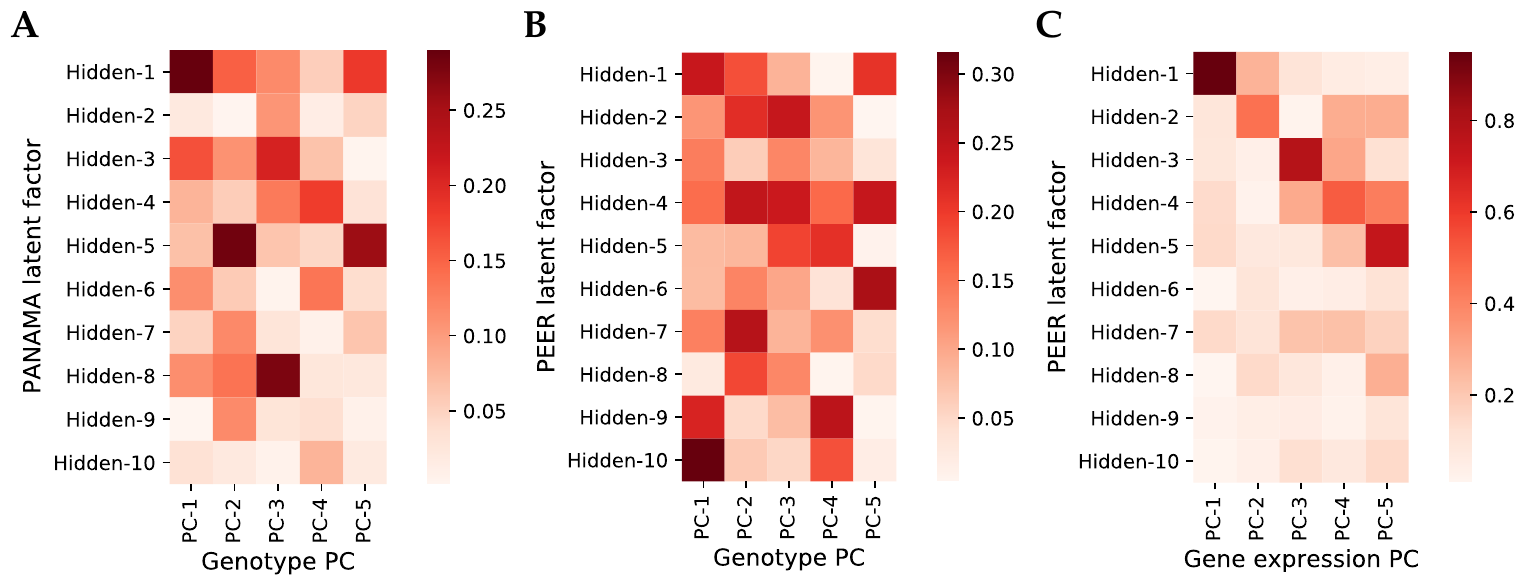}
 \caption{Cosine similarity between known covariates (5 genotype PCs) given to the model and hidden factors inferred by \panama\ (\textbf{A}) and \peer\ (\textbf{B}), and cosine similarity between gene expression PCs and and hidden factors inferred by \peer\ (\textbf{C}) when no known covariates are given to the model. Results are for randomly subsampled data of 200 segregants.}
  \label{fig:overlap-snp}
\end{figure}

As explained, hidden covariates inferred by \lvreml\ are automatically orthogonal to the known covariates (cf.\ Section \ref{sec:exact-solut-line}) and represent linearly independent axes of variation. In contrast, the latent variables inferred by \panama\ overlapped with the known genotype covariates supplied to the model, with cosine similarities of upto 30\% (Figure \ref{fig:overlap-snp}A). In  \panama\, covariances among the effects of the known confounders are assumed to be zero. When the optimal model (i.e.\ maximum-likelihood $\hat \K$) in fact has effects with non-zero covariance (as in Fig.~\ref{fig:numcovar-theta}B), the optimization algorithm in \panama\ will automatically select hidden confounders that overlap with the known confounders in order to account for these non-zero covariances (SI Section~\ref{sec:solution-full-model}), thus resulting in the observed overlap. Hence the common interpretation of \panama\ factors as new determinants of gene expression distinct from known genetic factors is problematic.

To test whether the overlap between inferred and already known covariates also occurs in other methods or is specific to \panama, we ran the \peer\ software \cite{stegle2012using} on a reduced dataset of 200 randomly selected samples from the yeast data (\peer\ runtimes made it infeasible to run on larger sample sizes). \peer\ is a popular software that uses a more elaborate hierarchical model to infer latent variance components \cite{stegle2010bayesian}. \peer\ hidden factors again showed cosine similarities of upto 30\% (Figure \ref{fig:overlap-snp}B), suggesting that its hidden factors also cannot be interpreted as completely new determinants of gene expression. We also tested the hidden factors returned by \peer\ when no known covariates are added to the model. In this case, model \eqref{eq:38} reduces to probabilistic PCA and both \lvreml\ and \panama\ correctly identify the dominant expression PCs as hidden factors (Fig.\ \ref{fig:log-like}A,B). Despite its more complex model, which does not permit an analytic solution even in the absence of known covariates, \peer\ hidden factors in fact do overlap strongly with the same dominant expression PCs (cosine similarities between 60\% to 80\%), indicating that the added value of the more complicated model structure may be limited, at least in this case.

\subsection{\lvreml\ is orders of magnitude faster than \panama}

An analytic solution does not only provide additional insight into the mathematical properties of a model, but can also provide significant gains in computational efficiency. The \lvreml\ solution can be computed using standard matrix operations from linear algebra, for which highly optimized implementations exist in all programming languages. Comparison of the runtime of the Python implementations of \lvreml\ and \panama\ on the yeast data at multiple sample sizes showed around ten thousand-fold speed-up factors, from several minutes for a single \panama\ run to a few tens of milliseconds for \lvreml\ (Fig.~\ref{fig:runtime}). Interestingly, the computational cost of \lvreml\ did not increase much when known covariates were included in the model, compared to the model without known covariates that is solved by PCA (Fig.~\ref{fig:runtime}A). In contrast, runtime of \panama\ blows up massively as soon as covariates are included (Fig.~\ref{fig:runtime}B). Nevertheless, even in the case of no covariates, \panama\ is around 600 times slower than the direct, eigenvector decomposition based solution implemented in \lvreml. Finally, the runtime of \lvreml\ does not depend on the number of known or inferred latent factors, whereas increasing either parameter in \panama\ leads to an increase in runtime (Supp.\ Fig.~\ref{fig:runtime_supp}). 

\begin{figure}
  \centering
  \includegraphics[width=\linewidth]{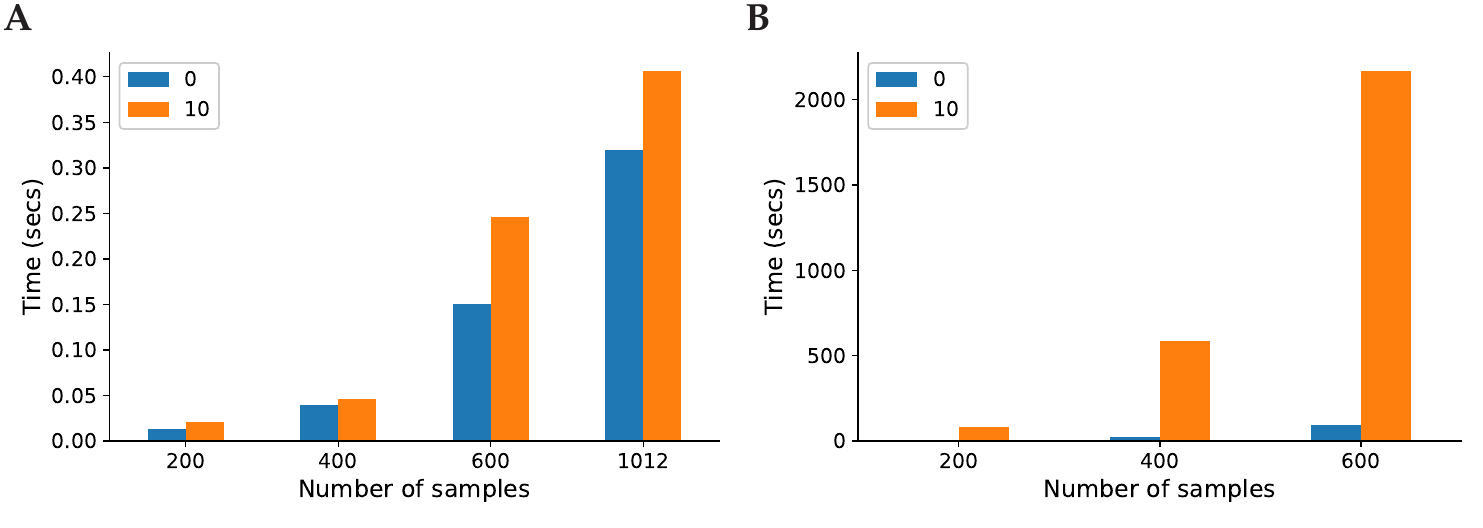}
  \caption{Runtime comparison on between \lvreml\ (\textbf{A}) and \panama\ (\textbf{B}), with parameters set to infer 85 hidden covariates with either 0 known covariates or including 10 genotype PCs as known covariates, at multiple sample sizes. Running \panama\ on the full dataset of 1,012 segregants was infeasible. For runtime comparisons at other parameter settings, see Supp.\ Fig.\ \ref{fig:runtime_supp}.}
  \label{fig:runtime}
\end{figure}

\section{Discussion}
\label{sec:discussion}

We presented a random effect model to estimate simultaneously the contribution of known and latent variance components in gene expression data, that is closely related to models that have been used previously in this context \cite{lawrence2005probabilistic, stegle2010bayesian, stegle2012using,  fusi2012joint,buettner2015computational}. By including additional parameters in our model to account for non-zero covariances among the effects of known covariates and latent factors, we were able to show that latent factors can always be taken orthogonal to, and therefore linearly independent of, the known covariates supplied to the model. This is important, because inferred latent factors are not only used to correct for correlation structure in the data, but also as new, data-derived ``endophenotypes'', that is, determinants of gene expression whose own genetic associations are biologically informative \cite{parts2011joint, stegle2012using}. As shown in this paper, the existing models and their numerical optimization result in hidden factors that in fact overlap significantly with the known covariates, and hence their value in uncovering ``new'' determinants of gene expression must be questioned.

To solve our model, we did not rely on numerical, gradient-based optimizers, but rather on an analytic restricted maximum-likelihood solution. This solution relies on a decomposition of the log-likelihood function that allows to identify hidden factors as principal components of the expression data matrix reduced to the orthogonal complement of the subspace spanned by the known covariates. This solution is guaranteed to minimize the amount of unexplained variation in the expression data for a given number of latent factors, and is analogous to the widely used restricted maximum-likelihood solution for conventional linear mixed models, where variance parameters of random effects are estimated in the subspace orthogonal to the maximum-likelihood estimates of the fixed effects.

Having an analytic solution is not only important for understanding the mathematical properties of a statistical model, but can also lead to significant reduction of the computational cost for estimating parameter values. Here we obtained a ten-thousand fold speed-up compared to an existing software that uses gradient-based optimization. On a yeast dataset with 1,012 samples, our method could solve the covariance structure and infer latent factors in less than half a second, whereas it was not feasible to run an existing implementation of gradient-based optimization on more than 600 samples.

The experiments on the yeast data showed that in real-world scenarios, \lvreml\ and the gradient-based optimizer implemented in the \panama\ software resulted in the same estimates for the sample covariance matrix. Although the latent variables inferred by both methods are different (orthogonal vs.\ partially overlapping with the population structure covariates), we anticipate that downstream linear association analyses will nevertheless give similar results as well. For instance, established protocols\cite{stegle2012using} recommend to use known and latent factors as covariates to increase the power to detect expression QTLs. Since orthogonal and overlapping latent factors can be transformed into each other through a linear combination with the known confounders, linear assocation models that use both known and latent factors as covariates will also be equivalent (SI Section~\ref{sec:downstream-analyses}).

While we have demonstrated that the use of latent variance components that are orthogonal to known confounders leads to significant analytical and numerical advantages, we acknowledge that it follows from a mathematical symmetry of the underlying statistical model that allows to transform a model with overlapping latent factors to an equivalent model with orthogonal factors. Whether the true but unknown underlying variance components are orthogonal or not, nor their true overlap value with the known confounders, can be established by the models studied in this paper precisely due this mathematical symmetry. Such limitations are inherent to all latent variable methods.

To conclude, we have derived an analytic restricted maximum-likelihood solution for a widely used class of random effect models for learning latent variance components in gene expression data with known and unknown confounders. Our solution can be computed in a highly efficient manner, identifies hidden factors that are orthogonal to the already known variance components, and results in the estimation of a sample covariance matrix that can be used for the downstream estimation of variance parameters for individual genes. The restricted maximum-likelihood method  facilitates the application of random effect modelling strategies for learning latent variance components to much larger gene expression datasets than currently possible.

\section{Methods}
\label{sec:methods}

\subsection{Mathematical methods}
\label{sec:mathematical-methods}

All model equations, mathematical results and detailed proofs are described in a separate Supplementary Information document.

\subsection{Data}
\label{sec:data}

We used publicly available genotype and RNA sequencing data from 1,012 segregants from a cross between two yeast strains \cite{albert2018genetics}, consisting of gene expression levels for 5,720 genes and (binary) genotype values for 42,052 SNPs. Following \cite{albert2018genetics}, we removed batch and optical density effects from the expression data using categorical regression. The expression residuals were centred such that each sample had mean zero to form the input matrix $\Y$ to the model (cf.\ SI Section \ref{sec:model}). L2-normalized genotype PCs were computed using the singular value decomposition of the genotype data matrix with centred (mean zero) samples, and used to form input matrices $\Z$ to the model (cf.\ SI Section \ref{sec:model}). Data preprocessing scripts are available at \lvurl.

\subsection{\lvreml\ analyses}
\label{sec:lvreml-analyses}

The \lvreml\ software, as well as a script that details the \lvreml\ analyses of the yeast data is available at \lvurl.

\subsection{\panama\ analyses}
\label{sec:panama-analyses}

We obtained the \panama\ software from the \limix\ package available at \panurl.

The following settings were used to ensure that exactly the same normalized data was used by both methods: 1) For parameter \textbf{Y}, the same gene expression matrix, with each sample normalized to have zero mean, was used that was used as input for \lvreml, setting the \textbf{standardize} parameter to \texttt{false}. 2) The parameter \textbf{Ks} requires a list of covariance matrices for each known factor. Therefore for each column $z_i$ of the matrix \textbf{Z} used by \lvreml\, we generated a covariance matrices $\mathbf{Ks}_i = z_i z_i^T$. The `\textbf{use Kpop}' parameter, which is used to supply a population structure covariance matrix to \panama\ in addition to the known covariates, was set to \texttt{false}.  

To be able to calculate the log-likelihoods and extract other relevant information from the \panama\ results, we made the following modifications to the \panama\ code: 1) The covariance matrices returned by \panama\ are by default normalized by dividing the elements of the matrix by the mean of its diagonal elements.  To make these covariance matrices comparable to \lvreml\, this normalization was omitted by commenting out the lines in the original \panama\ code where this normalization was being performed. 2) \panama\ does not return the variance explained by the known confounders unless the `\textbf{use Kpop}' parameter is set to \texttt{true}. % But since there was no need to include \textbf{Kpop} parameter in the analysis,
Therefore the code was modified so that it would still return the variance explained by the known confounders. 3) The \textbf{K} matrix returned by \panama\ does not include the effect of the noise parameter $\sigma^2$. Therefore the code was modified to return the {$\sigma^2 \I$ matrix, which was then added to the returned \textbf{K} i.e. $\K_{new} = \K + \sigma^2 \I$, to be able to use equation (\ref{eq:2}) to compute the log-likelihood.  The modified code is available as a fork of the \textsc{limix} package at \url{https://github.com/michoel-lab/limix-legacy}

% \subsection{Code}
% \label{sec:code}

% The \lvreml\ software as well as all data processing and analysis scripts are available at \lvurl.

%  \bibliographystyle{erc_long}
%\bibliographystyle{unsrt} 
%\bibliography{bibsysbiol}

%\end{document}

\newpage

%\appendix

\renewcommand\thesection{S\arabic{section}}
\renewcommand\thefigure{S\arabic{figure}}
\renewcommand\thetable{S\arabic{table}}
\renewcommand\theequation{S\arabic{equation}}
\setcounter{figure}{0}
 \setcounter{table}{0}
 \setcounter{section}{0}
\setcounter{equation}{0}

\chapter*{\centering Supplementary Figures}

\begin{figure}[h!]
  \includegraphics[width=\linewidth]{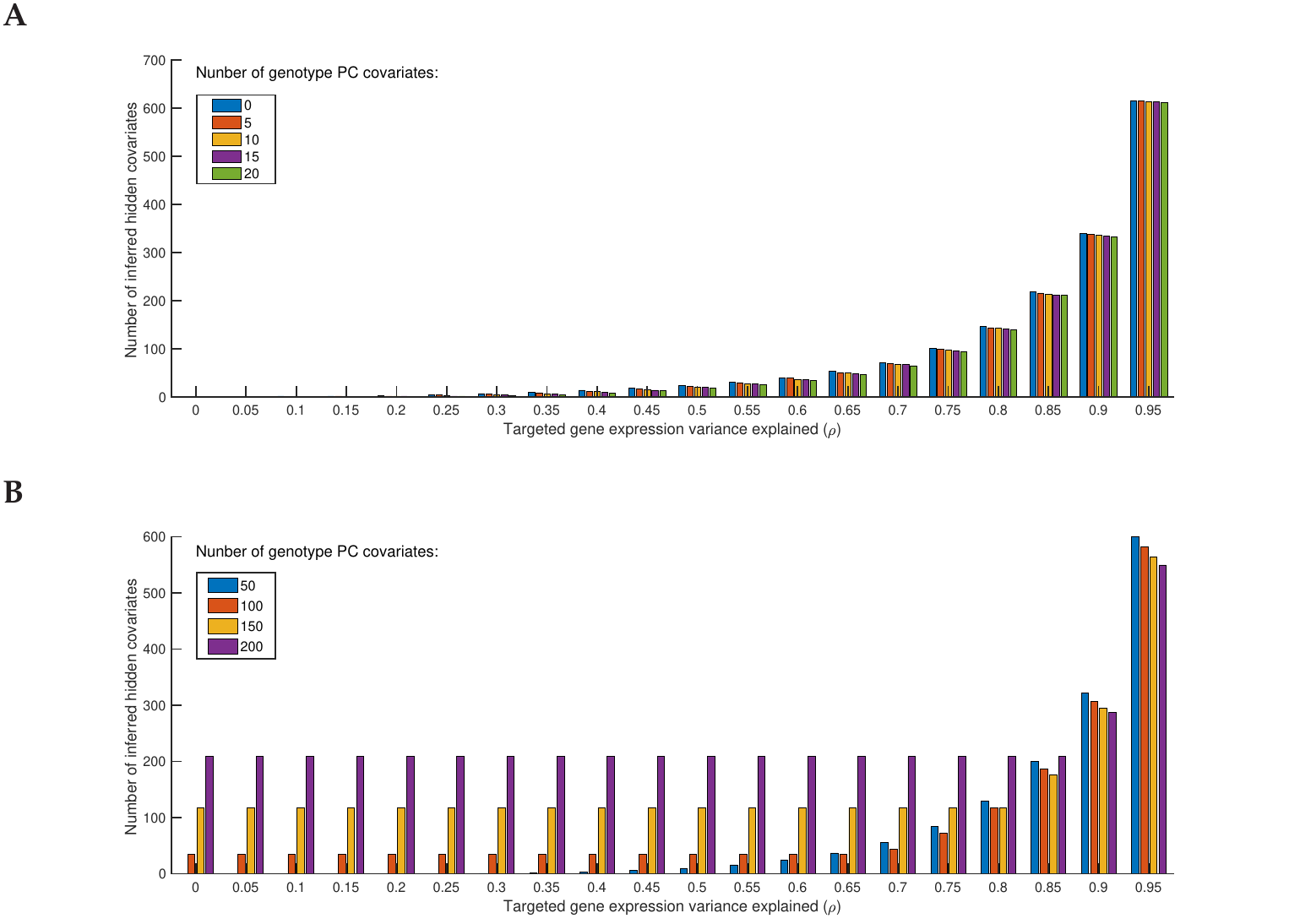}
  \caption{\textbf{A.} Number of hidden covariates inferred by \lvreml\ as a function of the parameter $\rho$ (the targeted total amount of variance explained by the known and hidden covariates), with $\theta$ (the minimum variance explained by a known covariate) set to retain 0, 5, 10, or 20 known covariates (genotype PCs) in the model. \textbf{B.} Same as panel \textbf{A}, with $\theta$ set to retain 50, 100, 150, or 200 genotype PCs in the model. The saturation of the number of hidden covariates with decreasing $\rho$ for models with 100, 150, and 200 known covariates is a visual indicator that some of the dimensions in the linear subspace spanned by the known covariates do not explain sufficient variation in the expression data, and the relevance or possible redundancy of (some of) the known covariates for explaining variation in the expression data needs to be reconsidered.}
  \label{fig:num_hidden_rho_supp}
\end{figure}

\begin{figure}[h!]
  \centering
  \includegraphics*[width=.63\linewidth]{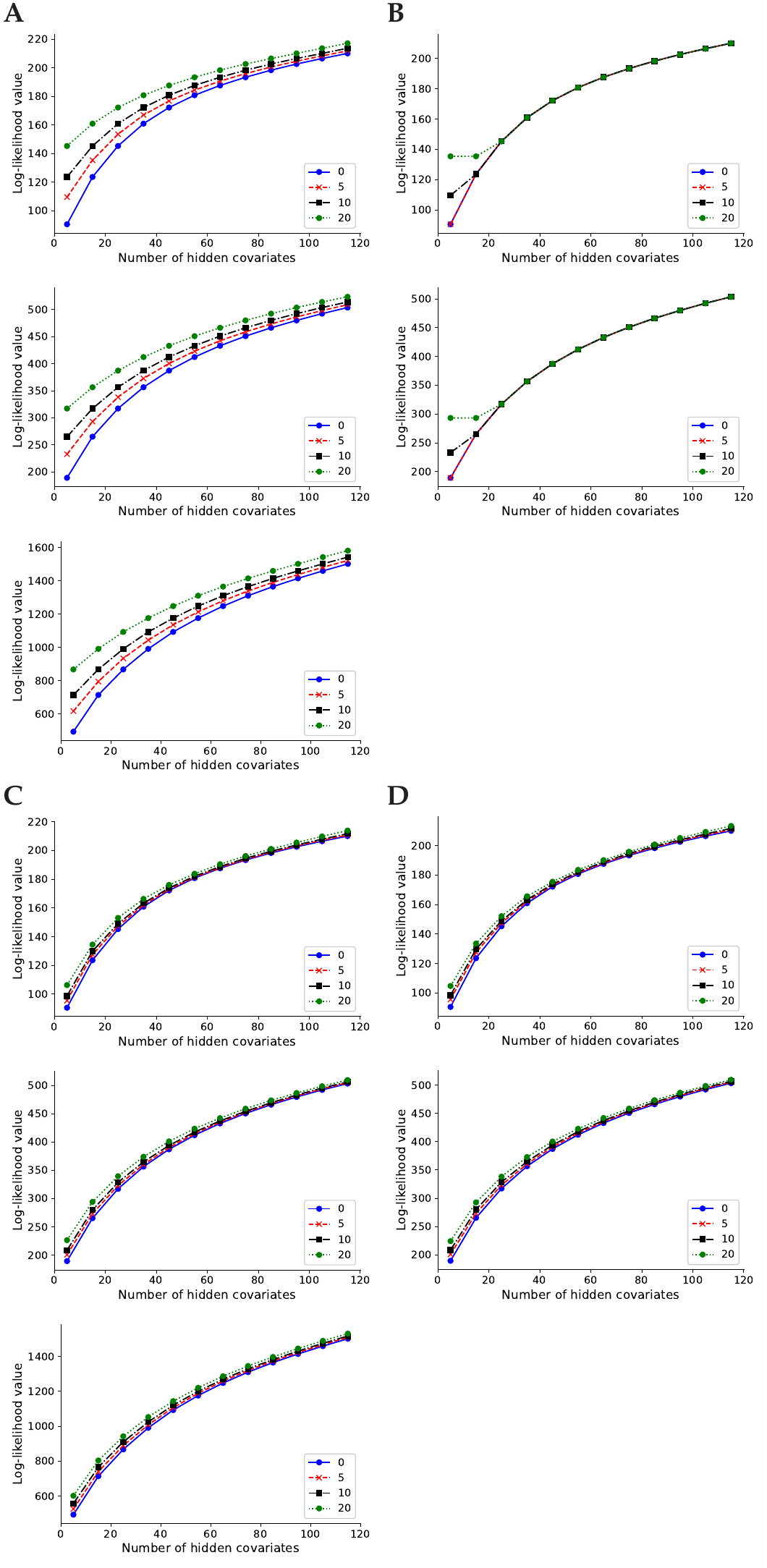}
  \caption{Log-likelihood values for \lvreml\ (\textbf{A,C}) and \panama\ (\textbf{B,D}) using 0, 5, 10, and 20 PCs of the expression data (\textbf{A},\textbf{B}) or genotype data (\textbf{C},\textbf{D}) as known covariates, at sample sizes of 200, 400, and in the case of \lvreml\, 1,012 segregants (top to bottom).}
  \label{fig:likelihood_supp}
\end{figure}

\begin{figure}[h!]
  \centering
  \includegraphics[width=.95\linewidth]{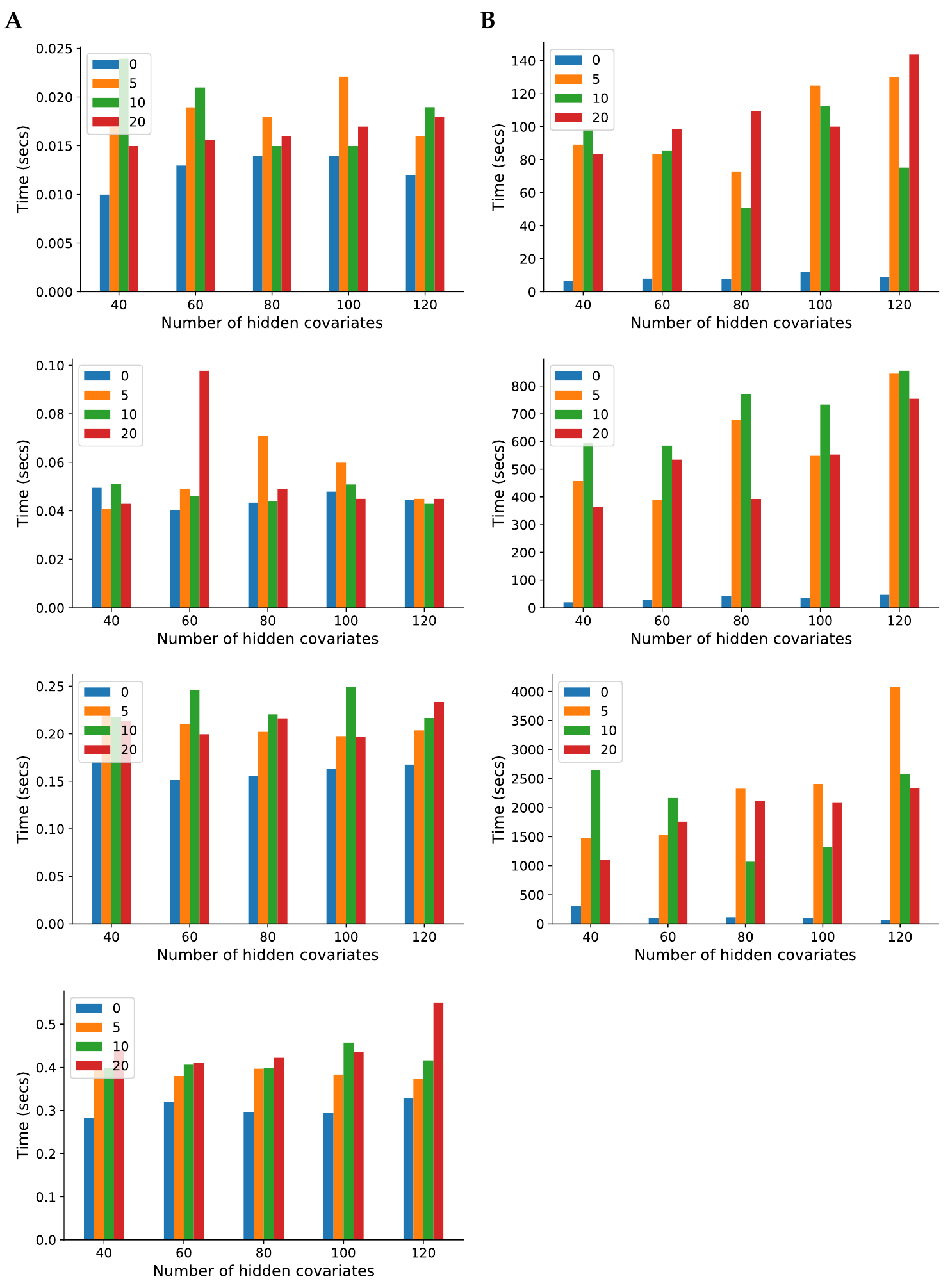}
  \caption{Runtime comparison on between \lvreml\ (\textbf{A}) and \panama\ (\textbf{B}), with parameters set to infer 85 hidden covariates with 0, 5, 10, or 20 genotype PCs included as known covariates, at  sample sizes of 200, 400, and in the case of \lvreml\, 1,012 segregants (top to bottom).}
  \label{fig:runtime_supp}
\end{figure}

\cleardoublepage

%\newpage

\chapter*{\centering Supplementary Methods}

\section{Preliminary results}
\label{sec:preliminary-results}

In the sections below, we will repeatedly use the following
results. The first result concerns linear transformations of normally distributed variables and can be found in most textbooks on statistics
or probability theory:
\begin{lem}\label{lem:affine-gaussian}
  Let $x\in\R^n$ be a random, normally distributed vector,
  \begin{align*}
    p(x) = \N(\mu,\Psi),
  \end{align*}
  with $\mu\in\R^n$, and $\Psi\in\R^{n\times n}$ a positive definite
  covariance matrix. For any linear transformation $y=\M x$ with
  $\M\in\R^{m\times n}$, we have
  \begin{align*}
    p(y) = \N(\M\mu,\M\Psi\M^T).
  \end{align*}\qed
\end{lem}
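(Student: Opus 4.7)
The plan is to prove the lemma via characteristic functions, which gives a clean argument that works for any $\M\in\R^{m\times n}$ without requiring $\M$ to be square or of full rank. The starting point is the standard fact that a normal random vector $x\sim\N(\mu,\Psi)$ has characteristic function
\begin{equation*}
  \phi_x(t) = \E\bigl[e^{it^T x}\bigr] = \exp\Bigl(it^T\mu - \tfrac{1}{2}t^T\Psi t\Bigr),\qquad t\in\R^n,
\end{equation*}
which by L\'evy's inversion theorem uniquely determines the distribution of $x$.

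First, I would compute the characteristic function of $y=\M x$ directly: for any $s\in\R^m$,
\begin{equation*}
  \phi_y(s) = \E\bigl[e^{is^T \M x}\bigr] = \E\bigl[e^{i(\M^T s)^T x}\bigr] = \phi_x(\M^T s).
\end{equation*}
Substituting the explicit form of $\phi_x$ and rearranging the scalar products $s^T\M\mu = (\M^T s)^T\mu$ and $s^T\M\Psi\M^T s = (\M^T s)^T\Psi(\M^T s)$ gives
\begin{equation*}
  \phi_y(s) = \exp\Bigl(is^T(\M\mu) - \tfrac{1}{2}s^T(\M\Psi\M^T) s\Bigr),
\end{equation*}
which is precisely the characteristic function of an $m$-dimensional normal distribution with mean $\M\mu$ and covariance $\M\Psi\M^T$. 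Uniqueness of characteristic functions then yields $p(y)=\N(\M\mu,\M\Psi\M^T)$.

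The only mild subtlety, and the one place where care is needed, is that $\M\Psi\M^T$ need not be strictly positive definite when $\M$ fails to have full row rank (e.g.\ when $m>n$); in that case the ``distribution'' on the right-hand side is a degenerate Gaussian supported on an affine subspace of $\R^m$, and the density with respect to Lebesgue measure does not exist. The characteristic-function derivation above handles this uniformly since it never requires inverting $\M\Psi\M^T$. For the applications in this paper, the relevant transformations $\M$ will have full row rank, so $\M\Psi\M^T$ will be positive definite and the result may be used in the density form stated.
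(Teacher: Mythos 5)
Your proof is correct. Note, however, that the paper does not actually prove this lemma at all: it states it with a \qed as a standard textbook fact about linear transformations of Gaussian vectors, so there is no paper proof to compare against step by step. Your characteristic-function argument is a clean, self-contained way to supply the missing proof: computing $\phi_y(s)=\phi_x(\M^Ts)$ and invoking uniqueness is exactly the standard route, and your algebra is right. What your approach buys, beyond filling the gap, is that it simultaneously justifies the paper's remark immediately following the lemma, namely that for overdetermined transformations ($m>n$) the result should be read as defining a \emph{degenerate} normal distribution with singular covariance $\M\Psi\M^T$: since your derivation never inverts $\M\Psi\M^T$ or writes a Lebesgue density, it covers the degenerate case uniformly, whereas the paper handles it informally (via interpretation as overdetermined linear combinations and a $\sigma^2\to0$ limit in one dimension). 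One further point worth making explicit: the paper applies the lemma in Section S2 with
\begin{equation*}
  \Psi=\begin{pmatrix}\B & \D & 0\\ \D^T & \A & 0\\ 0 & 0 & \sigma^2\I\end{pmatrix},
\end{equation*}
which is only assumed positive \emph{semi}-definite, so the lemma's hypothesis of strict positive definiteness is not quite met there; your characteristic-function proof extends verbatim to semi-definite $\Psi$ (the characteristic function $\exp(it^T\mu-\tfrac12 t^T\Psi t)$ still characterizes the degenerate Gaussian), which makes your argument strictly more general than the statement as given and better matched to how the lemma is actually used.
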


If the linear transformation $y=\M x$ in this Lemma is overdetermined, that is, if $m>n$, then the transformed covariance matrix $\Psi'=\M\Psi\M^T$ will have a lower rank $n$ than its dimension $m$, that is, $\Psi'\in\R^{m\times m}$ is a positive \emph{semi}-definite matrix (i.e., has one or more zero eigenvalues). Thus we can extend the definition of normal distributions to include \emph{degenerate}  distributions with  positive \emph{semi}-definite covariance matrix, by interpreting them as the distributions of  overdetermined linear combinations of normally distributed vectors. A degenerate one-dimensional normal distribution is simply defined as a $\delta$-distribution, that is, for $x\in\R$
\begin{align*}
  p(x) = \N(\mu,0) = \delta(x-\mu),
\end{align*}
which can be derived as a limit $\sigma^2\to 0$ of normal distribution density functions $\N(\mu,\sigma^2)$.

The second result is one that is attributed to von Neumann \cite{anderson1985maximum}:
\begin{lem}\label{lem:von-Neumann}
  Let $\P,\Q\in\R^{n\times n}$ be two positive definite matrices. Then
  \begin{equation}\label{eq:25}
    \tr(\P^{-1}\Q) \geq \sum_{i=1}^n \pi^{-1}_i \chi_i,% \chi_{n-i+1},
  \end{equation}
  where $\pi_1\geq \dots \geq \pi_n$ and $\chi_1\geq\dots\geq \chi_n$
  are the ordered eigenvalues of $\P$ and $\Q$, respectively, and
  equality in eq.~\eqref{eq:25} is achieved if and only if the
  eigenvector of $\P$ corresponding to $\pi_i$ is equal to the
  eigenvector of $\Q$ corresponding to $\chi_{n-i+1}$,
  $i=1,\dots,n$. \qed
\end{lem}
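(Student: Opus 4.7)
The plan is to diagonalise both matrices via their spectral decompositions and thereby reduce the trace to a linear optimisation over doubly stochastic matrices, and then to combine Birkhoff's theorem with the rearrangement inequality. Concretely, I would write $\P^{-1} = \U D_\pi \U^T$ and $\Q = \V D_\chi \V^T$ with $D_\pi = \diag(\pi_1^{-1},\dots,\pi_n^{-1})$, $D_\chi = \diag(\chi_1,\dots,\chi_n)$, and $\U, \V$ orthogonal matrices whose columns are the eigenvectors of $\P$ and $\Q$ arranged to match the ordering of the diagonal entries. Setting $\mathbf{S} = \U^T\V$ (also orthogonal), cyclicity of the trace gives
\begin{equation*}
\tr(\P^{-1}\Q) = \tr(D_\pi\, \mathbf{S}\, D_\chi\, \mathbf{S}^T) = \sum_{i,j=1}^n \pi_i^{-1}\, \chi_j\, s_{ij}^2,
\end{equation*}
where $s_{ij}$ are the entries of $\mathbf{S}$.

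I would then observe that $a_{ij} := s_{ij}^2$ defines a doubly stochastic matrix, since the rows and columns of an orthogonal matrix have unit Euclidean norm. By Birkhoff's theorem, $(a_{ij})$ lies in the convex hull of permutation matrices, and because the objective $\sum_{i,j}\pi_i^{-1}\chi_j\, a_{ij}$ is linear in $a$, its minimum over this polytope is attained at an extreme point, i.e.\ a permutation matrix $a_{ij} = \delta_{j,\sigma(i)}$, giving the value $\sum_i \pi_i^{-1}\chi_{\sigma(i)}$. The sequence $(\pi_i^{-1})$ is increasing in $i$ while $(\chi_j)$ is decreasing in $j$, so by the rearrangement inequality the minimum over $\sigma$ is attained at $\sigma = \mathrm{id}$ and equals $\sum_i \pi_i^{-1}\chi_i$. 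Chaining the two lower bounds proves \eqref{eq:25}.

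For the equality statement I would track back through both reductions. Equality in the Birkhoff step forces $(a_{ij})$ itself to be a permutation matrix, so that $\mathbf{S}$ has exactly one nonzero entry of modulus one in each row and column; equality in the rearrangement step then pins down which permutation. Both conditions together say that the eigenvectors of $\P^{-1}$ associated with the prescribed eigenvalues must coincide, up to sign, with the corresponding eigenvectors of $\Q$. The main obstacle in this last step is eigenvalue degeneracy: inside a repeated eigenspace of $\P$ or $\Q$ the eigenvectors are only determined up to an arbitrary orthogonal rotation, and one has to verify that such intra-block rotations leave both $D_\pi \mathbf{S} D_\chi \mathbf{S}^T$ and the bound invariant, so that compatible orthonormal bases can always be chosen inside each degenerate eigenspace to realise the stated eigenvector matching.
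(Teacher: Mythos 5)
The paper never proves this lemma: it is stated as a known preliminary result, attributed to von Neumann with a citation to Anderson and Olkin, so there is no in-paper argument to compare yours against and your proposal has to stand on its own. The inequality half does stand: the reduction $\tr(\P^{-1}\Q)=\sum_{i,j}\pi_i^{-1}\chi_j s_{ij}^2$ via $\mathbf{S}=\U^T\V$, the observation that $a_{ij}=s_{ij}^2$ is doubly stochastic, the Birkhoff/linear-programming step, and the rearrangement step are all correct, and chaining them gives exactly eq.~\eqref{eq:25}. This is the standard modern route to the trace inequality for symmetric positive definite matrices.

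The equality characterization is where you have a genuine gap, and it is not only the degeneracy caveat you flag at the end. Your claim that ``equality in the Birkhoff step forces $(a_{ij})$ itself to be a permutation matrix'' is false in general: the set of minimizers of a linear functional over a polytope is a face, and a face is a single vertex only when the minimizing vertex is unique. That uniqueness holds exactly when the spectra are simple; then strict rearrangement makes the identity the unique minimizing permutation, equality forces $a=\I$, hence $s_{ij}=\pm\delta_{ij}$, and your back-tracking goes through. But when $\P$ or $\Q$ has a repeated eigenvalue, several permutations minimize, the minimizing face is their convex hull, $(a_{ij})$ need not be a permutation (take the average of two block-preserving permutations), and $\mathbf{S}$ can be an arbitrary orthogonal rotation inside a degenerate eigenspace. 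So as written your argument establishes the ``only if'' direction only for simple spectra; the degenerate case, which you defer in your last sentence, is precisely the case in which your Birkhoff claim breaks, and it is also the case in which the conclusion must be restated as ``orthonormal eigenbases of $\P$ and $\Q$ \emph{can be chosen} so that they coincide with matched ordering.''

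One further point worth recording: the equality condition your argument produces is the same-index matching --- eigenvector of $\P$ for $\pi_i$ equal to eigenvector of $\Q$ for $\chi_i$ --- and this is the correct condition, whereas the lemma as printed pairs $\pi_i$ with $\chi_{n-i+1}$. A $2\times 2$ check shows the printed version fails in both directions: $\P=\diag(2,1)$ and $\Q=\diag(3,1)$ give $\tr(\P^{-1}\Q)=\tfrac52=\sum_i\pi_i^{-1}\chi_i$ although the printed condition is violated, while $\Q=\diag(1,3)$ satisfies the printed condition yet $\tr(\P^{-1}\Q)=\tfrac72>\tfrac52$. The paper's own applications of the lemma, in the proofs of Theorems~\ref{thm:no-latent} and \ref{thm:latent-only}, require the minimizer $\hat\K$ to share eigenvectors with $\C$ matched in the \emph{same} order, i.e.\ they use the same-index form; the $\chi_{n-i+1}$ in the statement is evidently an indexing slip, which your proof, once the degenerate case is repaired, would correct.
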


\section{The model}
\label{sec:model}

We will use the following notation:
\begin{itemize}
\item $\Y\in\R^{n\times m}$ is a matrix of gene expression data for $m$ genes in $n$ samples. The $i$th column of $\Y$ is denoted $y_i\in\R^n$ and corresponds to the vector of expression values for gene $i$. We assume that the data in each sample are centred, $\sum_{i=1}^m y_i=0\in\R^{n}$.
\item $\Z\in\R^{n\times d}$ is a matrix of values for $d$ known confounders in the same $n$ samples. The $k$th column of $\Z$ is denoted $z_k\in\R^n$ and corresponds to the data for confounding factor $k$.
\item $\X\in\R^{n\times p}$ is a matrix of values for $p$ latent variables to be determined in the same $n$ samples. The $j$th column of $\X$ is denoted $x_j\in\R^n$.
\end{itemize}

To identify the hidden correlation structure of the expression data, we assume a linear relationship between expression levels and the known and latent variables, with random noise added:
\begin{equation}\label{eq:1}
  %y_i = \sum_{k=1}^d V_{ki} z_k + \sum_{j=1}^p W_{ji} x_j + \epsilon_i,
  y_i = \Z v_i + \X w_i + \epsilon_i,
\end{equation}
where $v_i\in \R^d$ and $w_i\in\R^p$ are jointly normally distributed random vectors,
\begin{align}
  p\left(
  \begin{bmatrix}
    v_i\\ w_i
  \end{bmatrix}
  \right) &= \N\left(0,
            \begin{bmatrix}
              \B & \D\\
              \D^T & \A
            \end{bmatrix}\right) \label{eq:7}
%  p(v_i) &= \N(0,\B)\label{eq:7}\\
%  p(w_i) &= \N(0,\A)\label{eq:8}
\end{align}
with $\B\in\R^{d\times d}$, $\D\in\R^{d\times p}$ and $\A=\diag(\alpha_1^2,\dots,\alpha_p^2)$, such that
\begin{align*}
  \Psi = \begin{bmatrix}
    \B & \D\\
    \D^T & \A
  \end{bmatrix} \end{align*} is a positive semi-definite matrix; the errors $\epsilon_i\in\R^n$ are assumed to be independent and normally distributed, \begin{align*}
  p(\epsilon_i) = \N(0,\sigma^2\I).
\end{align*}
Note that our aim is to identify variance components shared across
genes, and hence $\sigma^2$ is assumed to be the same for all $i$. By
assumption, the errors are also independent of the effect sizes, and
hence we can write
\begin{equation}\label{eq:40}
  p\left(\begin{bmatrix}
    v_i\\ w_i\\ \epsilon_i
  \end{bmatrix}
  \right) = \N\left(0,
            \begin{bmatrix}
              \B & \D & 0\\
              \D^T & \A & 0\\
              0 & 0 & \sigma^2\I
            \end{bmatrix}\right).
\end{equation}
By Lemma~\ref{lem:affine-gaussian}, $y_i$ is normally distributed with
distribution
\begin{equation}\label{eq:39}
  p(y_i) = \N(0,\K) = \frac1{(2\pi)^{\frac{n}2}\sqrt{\det(\K)}} \exp\bigl(-\frac12\langle y_i,\K^{-1}y_i\rangle\bigr),
\end{equation}
where
\begin{align*}
  \K &= 
       \begin{bmatrix}
         \Z & \X & \I
       \end{bmatrix}
       \begin{bmatrix}
         \B & \D & 0\\
         \D^T & \A & 0\\
         0 & 0 & \sigma^2\I
       \end{bmatrix}
       \begin{bmatrix}
         \Z^T \\ \X^T \\ \I
       \end{bmatrix}
  = \Z\B\Z^T + \Z\D\X^T + \X\D^T\Z + \X\A\X^T + \sigma^2 \I,
\end{align*}
and we used the notation $\langle u,v\rangle=u^Tv$ to denote the inner
product between two vectors in $\R^n$.

% In \cite{fusi2012joint,buettner2015computational}, the same model was considered, but with a diagonal covariance matrix $\B$ for the known confounding factors. Eq.~\eqref{eq:7} is more general and accounts for possible lack of linear independence between the selected covariates. Moreover, as we will see below, it results in an exactly solvable model. There is no need to consider a non-diagonal covariance matrix for the latent variables, because, as will be shown below, these factors have an inherent rotational symmetry that would allow any non-diagonal model to be converted to an equivalent  diagonal model. By definition, the known covariates correspond to measured or `natural' variables, and hence they have no such rotational symmetry.
Defining matrices $\V\in\R^{d\times m}$ and $\W\in\R^{p\times m}$, whose columns are the random effect vectors $v_i$ and $w_i$, respectively, eq.~(\ref{eq:1}) can be written in matrix notation as
\begin{align*}
  \Y = \Z \V + \X \W + \ebf
\end{align*}
Under the assumption that the columns $y_i$ of $\Y$ are independent samples of the distribution (\ref{eq:39}), the likelihood of observing $\Y$ given covariate data $\Z$, (unknown) latent variable data $\X$ and values for the hyper-parameters $\Theta=\{\sigma^2,\A, \B, \D\}$, is given by
\begin{align*}
  p(\Y\mid \Z,\X,\Theta) % &= \int_{\R^{d\times m}} d\V \int_{\R^{p\times m}} d\W\; p(\Y\mid \V,\W, \Z,\X) p(\V,\W\mid \Theta)\\
  % &= \prod_{i=1}^m \int_{\R^{d}} dv_i \int_{\R^{p}} dw_i\; p(y_i\mid v_i,w_i, \Z,\X) p(v_i,w_i\mid \Theta)\\
  &= \prod_{i=1}^m p(y_i\mid 0, \K).
\end{align*}
% where $p(y_i\mid 0,\K)=\N(0,\K)$ denotes a multivariate normal distribution with mean $0$ and covariance matrix $\K$,
% \begin{align*}
%   p(y_i\mid 0,\K) = \frac1{(2\pi)^{\frac{N}2}\sqrt{\det(\K)}} \exp\bigl(-\frac12\langle y_i,\K^{-1}y_i\rangle\bigr),
% \end{align*}
% where we used the notation $\langle u,v\rangle=u^Tv$ to denote the inner product between two vectors and
% \begin{equation}%
%   \K = \Z\B\Z^T + \X\A\X^T + \sigma^2 \I.
% \end{equation}
Note that in standard mixed-model calculations, the distribution (\ref{eq:39}) is often arrived at by integrating out the random effects. This is equivalent to application of Lemma~\ref{lem:affine-gaussian}.

 To conclude, the log-likelihood is, upto an additive constant, and divided by half the number of genes:
\begin{equation*}  
  \L =  -\frac2{m}\Bigl[ \frac{m}2\log\det(\K) + \frac12 \sum_{i=1}^m \langle y_i,\K^{-1}y_i\rangle \Bigr] = - \log\det(\K) - \tr\bigl(\K^{-1} \C),
\end{equation*}
where
\begin{align*}
  \C = \frac{\Y\Y^T}{m}
\end{align*}
is the empirical covariance matrix.

\section{Systematic effects on the mean}
\label{sec:syst-effects-mean}

Eq.~\eqref{eq:1} only considers random effects, which leads to a model for studying systematic effects on the covariance between samples. We could also include fixed effects to model systematic effects on mean expression level. However, by centering the data, $\sum_{i=1}^m y_i=0$, the maximum-likelihood estimate of such fixed effects is always zero. To see this, let $\T\in\R^{n\times c}$ be a matrix of $c$ covariates with fixed effects $\beta\in\R^c$ shared across genes (we are only interested in discovering systematic biases in the data). Then the minus log-likelihood \eqref{eq:2} becomes
\begin{align*}
  \L =  \log\det(\K) + \frac1{m} \sum_{i=1}^m \langle y_i-\T\beta,\K^{-1}(y_i-\T\beta)\rangle
\end{align*}
Optimizing with respect to $\beta$ leads to the equation
\begin{align*}
 \hat\beta = ( \T^T \K^{-1} \T)^{-1}\T^T\bar y 
\end{align*}
where
\begin{align*}
  \bar y = \frac1{m}\sum_{i=1}^m y_i=0.
\end{align*}

\section{Solution of the model without latent variables}
\label{sec:solut-model-no-latent}

We start by considering the problem of finding the maximum-likelihood solution in the absence of any latent variables, i.e. minimizing eq.~\eqref{eq:2} with
\begin{equation}\label{eq:10}
  \K = \Z\B\Z^T + \sigma^2\I
\end{equation}
with respect to $\B$ and $\sigma^2$.

Note first of all that we may assume the set of confounding factors $\{z_1,\dots, z_d\}$ to be linearly independent, because if not, the expression in eq.~\eqref{eq:1} can be rearranged in terms of a linearly independent subset of factors whose coefficients are still normally distributed due to elementary properties of the multivariate normal distribution, see for instance the proof of Lemma~\ref{lem:ortho} below. Linear independence of $\{z_1,\dots, z_d\}$ implies that we must have $d\leq n$ and $\rnk(\Z)=d$.

 The singular value decomposition allows to decompose $\Z$ as $\Z = \U\Gbf\V^T $, where $\U\in\R^{n\times n}$, $\U^T\U=\U\U^T=\I$, $\Gbf\in\R^{n\times d}$ diagonal with $\gamma_k^2\equiv\Gamma_{kk}>0$ for $k\in\{1,\dots,d\}$ [this uses $\rnk(\Z)=d$], and $\V\in\R^{d\times d}$, $\V^T\V=\V\V^T=\I$. There is also a `thin' SVD, $\Z= \U_1\Gbf_1\V^T$, where $\U_1\in\R^{n\times d}$, $\U_1^T\U_1=\I$, $\Gbf_1\in\R^{d\times d}$ diagonal with diagonal elements $\gamma_k^2$. In block matrix notation, $\U=(\U_1, \U_2)$ and
  \begin{equation}\label{eq:14}
    \Z =
    \begin{pmatrix}
      \U_1& \U_2      
    \end{pmatrix}
    \begin{pmatrix}
      \Gbf_1\\
      0 
    \end{pmatrix}
    \V
  \end{equation}
  Note that unitarity of $\U$ implies $\U_1^T\U_2=0$. 

 Denote by $\Hs$ the space spanned by the columns (i.e. covariate vectors) of $\Z$. The projection matrix $\Pb$ onto $\Hs$ is given by
  \begin{align*}
    \Pb = \Z(\Z^T\Z)^{-1}\Z^T &= \U_1\Gbf_1 \V^T (\V\Gbf_1^{-2}\V^T) \V\Gbf_1 \U_1^T = \U_1\U_1^T.
  \end{align*}

Using the basis of column vectors of $\U$, we can write any matrix $\M\in\R^{n\times n}$ as a partitioned matrix
\begin{align}
  \U^T\M\U &=
  \begin{pmatrix}
    \M_{11} & \M_{12}\\
    \M_{21} & \M_{22}
  \end{pmatrix} \label{eq:31}\\
  \intertext{where}
  \M_{ij} &= \U_i^T \M \U_j. \label{eq:31b}
\end{align}
The following results for partitioned matrices are derived easily or can be found in \cite{horn1985}:
\begin{align}
  \tr(\M) &= \tr(\M_{11}) + \tr(\M_{22})\label{eq:41}\\
  \det(\M) &= \det\bigl(\M_{11}-\M_{12}\M_{22}^{-1}\M_{21}\bigr) \, \det(\M_{22})\label{eq:42}
\end{align}

Using this notation, the following result solves the model without latent variables:
\begin{thm}\label{thm:no-latent}
  Let $\C\in\R^{n\times n}$ be a positive definite matrix such that
\begin{equation}\label{eq:23}
    \lambda_{\min} (\C_{11}) > \frac{\tr(\C_{22})}{n-d},
  \end{equation}
  where $\lambda_{\min}(\cdot)$ denotes the smallest eigenvalue of a matrix. Then the maximum-likelihood solution
  \begin{align}\label{eq:43}
    \hat\K = \argmin_{\{\K\colon \K = \Z\B\Z^T +\sigma^2\I\}}  \log\det \K + \tr\bigl(\K^{-1}\C\bigr),
  \end{align}
  subject to $\B$ being positive semi-definite and $\sigma^2\geq 0$,  is given by
  \begin{align}
    \hat \B &= \V \Gbf_1^{-1}  (\C_{11}-\hat\sigma^2\I)  \Gbf_1^{-1} \V^T \label{eq:21}\\
    \hat\sigma^2 &= \frac{\tr(\C_{22})}{n-d}\label{eq:22}
  \end{align}
\end{thm}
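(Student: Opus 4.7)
The plan is to diagonalize the problem by working in the orthonormal basis given by the SVD of $\Z$, which turns $\K$ into a block-diagonal matrix and separates the objective into a $\tilde\B$-part supported on the column space of $\Z$ and a $\sigma^2$-part supported on its orthogonal complement.

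First I would reparameterize: using $\Z=\U_1\Gbf_1\V^T$ and letting $\tilde\B=\Gbf_1\V^T\B\V\Gbf_1$, the map $\B\mapsto\tilde\B$ is a bijection on positive semi-definite matrices (since $\Gbf_1$ is invertible and $\V$ is orthogonal), and $\Z\B\Z^T=\U_1\tilde\B\U_1^T$. Conjugating $\K$ by $\U$ and using $\U_1^T\U_1=\I$, $\U_1^T\U_2=0$ gives the block form
\begin{equation*}
  \U^T\K\U = \begin{pmatrix} \tilde\B+\sigma^2\I & 0\\ 0 & \sigma^2\I\end{pmatrix}.
\end{equation*}
Applying the partitioned trace and determinant identities \eqref{eq:41}--\eqref{eq:42} (the determinant factors trivially since the off-diagonal blocks vanish, and the inverse is block-diagonal), the objective becomes
\begin{equation*}
  \log\det(\tilde\B+\sigma^2\I)+(n-d)\log\sigma^2 + \tr\bigl((\tilde\B+\sigma^2\I)^{-1}\C_{11}\bigr)+\sigma^{-2}\tr(\C_{22}).
\end{equation*}
A key observation is that the off-diagonal blocks $\C_{12}, \C_{21}$ drop out entirely, so only $\C_{11}$ and $\C_{22}$ enter the optimization.

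Next I would set $\M=\tilde\B+\sigma^2\I$, so that the constraint $\tilde\B\succeq 0$ becomes $\M\succeq \sigma^2\I$, and split the optimization. For fixed $\sigma^2$, the function $\M\mapsto \log\det\M+\tr(\M^{-1}\C_{11})$ is minimized over all positive definite $\M$ at $\hat\M=\C_{11}$ (standard: diagonalizing $\C_{11}^{-1/2}\M\C_{11}^{-1/2}$ reduces the problem to minimizing $\log x+1/x$ eigenvalue-wise, whose unique minimum is at $x=1$). This unconstrained minimum is feasible iff $\sigma^2\leq \lambda_{\min}(\C_{11})$. Substituting $\hat\M=\C_{11}$ reduces the remaining $\sigma^2$ optimization to minimizing $(n-d)\log\sigma^2+\sigma^{-2}\tr(\C_{22})$, whose critical point is $\hat\sigma^2=\tr(\C_{22})/(n-d)$ and where the second derivative is positive.

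Finally I would use hypothesis \eqref{eq:23} to certify consistency of the two steps: it states exactly that $\hat\sigma^2<\lambda_{\min}(\C_{11})$, which guarantees $\hat{\tilde\B}=\C_{11}-\hat\sigma^2\I\succ 0$, so the unconstrained $\M$-optimum is strictly feasible. A short continuity/monotonicity argument then rules out minima in the regime $\sigma^2>\lambda_{\min}(\C_{11})$: in that regime the constrained $\M$-minimum can only be larger than the unconstrained one, while the $\sigma^2$-terms $(n-d)\log\sigma^2+\sigma^{-2}\tr(\C_{22})$ are already larger than their value at $\hat\sigma^2$. Unwinding the reparameterization, $\hat\B=\V\Gbf_1^{-1}\hat{\tilde\B}\Gbf_1^{-1}\V^T=\V\Gbf_1^{-1}(\C_{11}-\hat\sigma^2\I)\Gbf_1^{-1}\V^T$, giving \eqref{eq:21}--\eqref{eq:22}. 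The main obstacle is the joint constraint coupling $\tilde\B$ and $\sigma^2$; condition \eqref{eq:23} is precisely what decouples them and makes the analytic closed form available.
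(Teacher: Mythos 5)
Your proof is correct, and it follows the paper's overall skeleton: the SVD-based change of basis, the block-diagonalization of $\K$, the observation that the objective splits into a $\C_{11}$-part and a $\C_{22}$-part (with $\C_{12}$ dropping out), and condition \eqref{eq:23} serving as the final feasibility certificate. Where you genuinely differ is in the inner optimization over the block $\K_{11}$. The paper invokes von Neumann's trace inequality (Lemma~\ref{lem:von-Neumann}) to argue that the minimizing $\hat\K_{11}$ must share eigenvectors with $\C_{11}$, and then minimizes $\sum_i\bigl(\log\kappa_i + \lambda_i/\kappa_i\bigr)$ over the eigenvalues to conclude $\hat\K_{11}=\C_{11}$. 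You instead substitute $\M=\tilde\B+\sigma^2\I$ and use the whitening argument --- diagonalize $\C_{11}^{-1/2}\M\,\C_{11}^{-1/2}$ and minimize $\log x + 1/x$ eigenvalue-wise --- to get $\hat\M=\C_{11}$ directly; this is the classical Anderson--Olkin argument for the unconstrained Gaussian MLE and bypasses von Neumann entirely. Your route buys two things: it is more self-contained (no trace inequality needed), and your explicit handling of the coupled constraint $\M\succeq\sigma^2\I$ --- in particular ruling out candidate minima in the regime $\sigma^2>\lambda_{\min}(\C_{11})$ --- is more careful than the paper's proof, which simply verifies positive semi-definiteness of $\hat\B$ at the end and leaves implicit that a feasible unconstrained optimum is automatically the constrained optimum. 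What the paper's choice buys is reuse: von Neumann's lemma is indispensable for its Theorem~\ref{thm:latent-only} (probabilistic PCA), where the rank constraint on $\X\A\X^T$ prevents setting $\K$ equal to $\C$ and eigenvector alignment is the essential mechanism, so the paper deploys one tool for both theorems, whereas your whitening argument does not extend to that rank-constrained setting.
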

\begin{proof}
  Using eq.~\eqref{eq:14}, we can write
  \begin{align*}
    \K = \Z\B\Z^T + \sigma^2\I &= \U_1\Gbf_1\V^T\B\V\Gbf_1\U_1^T + \sigma^2 (\U_1\U_1^T + \U_2\U_2^T)\\
    &=  \U_1\Gbf_1\V^T \bigl( \B + \sigma^2 \V\Gbf_1^{-2}\V^T\bigr) \V\Gbf_1\U_1^T + \sigma^2  \U_2\U_2^T.
  \end{align*}
  Hence, in the block matrix notation (\ref{eq:31}), we have
  \begin{align*}
    \K_{11} &= \Gbf_1\V^T \bigl( \B + \sigma^2 \V\Gbf_1^{-2}\V^T\bigr) \V\Gbf_1\\
    \K_{22} &= \sigma^2 \I\\
    \K_{12} &= \K_{21} = 0.
  \end{align*}
  It follows that
  \begin{align*}
    \K^{-1} =
    \begin{pmatrix}
      \K_{11}^{-1} & 0\\
      0 & \K_{22}^{-1}
    \end{pmatrix}
  \end{align*}
  and, using eqs.~(\ref{eq:41}) and (\ref{eq:42}),
  \begin{align*}
    \log\det(\K) &= \log\det(\K_{11}) + \log\det(\K_{22})
                   = \log\det(\K_{11}) + (n-d) \log(\sigma^2)\\
    \tr(\K^{-1}\C) &= \tr(\K_{11}^{-1}\C_{11}) + \tr(\K_{22}^{-1}\C_{22})
                     = \tr(\K_{11}^{-1}\C_{11}) + \frac{\tr(\C_{22})}{\sigma^2}.
  \end{align*}
  Let $\C_{11}$ have eigenvalues $\lambda_1\geq \dots\geq \lambda_d$ with corresponding eigenvectors $u_1,\dots, u_d\in\R^d$. Applying Lemma~\ref{lem:von-Neumann} to the term  $\tr(\K_{11}^{-1}\C_{11})$, it follows that for the minimizer $\hat\K$,  $\hat\K_{11}$ must have eigenvalues $\kappa_1\geq \dots\geq \kappa_d$ with the same eigevectors $u_1,\dots, u_d$ as $\C_{11}$. Expressing the minus log-likelihood in terms of these eigenvalues results in
  \begin{align*}
    \L(\hat\K) = \sum_{i=1}^d\log(\kappa_i) + \sum_{i=1}^d  \kappa_i^{-1} \lambda_i + (n-d) \log(\sigma^2) + \frac{\tr(\C_{22})}{\sigma^2}.
  \end{align*}
  Minimizing with respect to the parameters $\kappa_i$ and $\sigma^2$ (i.e., 
setting their derivatives to zero) results in the solution $\hat\kappa_i=\lambda_i$ for all $i$ and $\hat\sigma^2=\frac{\tr(\C_{22})}{n-d}$. In other words, $\hat\K_{11}$ has the same eigenvalues and eigenvectors as $\C_{11}$, that is,
\begin{align*}
  \hat\K_{11} &= \C_{11}.
\end{align*}
This equation is satisfied if
\begin{align*}
  \hat\B+\hat\sigma^2 \V\Gbf_1^{-2} \V^T &= \V \Gbf_1^{-1} \C_{11} \Gbf_1^{-1} \V^T
 \end{align*}
 or
\begin{align*}
  \hat\B&= \V \Gbf_1^{-1} (\C_{11}-\hat\sigma^2\I)  \Gbf_1^{-1} \V^T
\end{align*}

 $\hat \B$ is positive semi-definite if and only if for all $v\in\R^d$
  \begin{align*}
    0< \langle v, \hat \B v\rangle = \langle w, (\C_{11}-\hat\sigma^2\I) w\rangle,
  \end{align*}
  where $w=\Gbf_1\V v$. Because $\V$ is unitary and $\Gbf_1$ diagonal with strictly positive elements, $\langle v, \hat \B v\rangle>0$ for all $v\in\R^d$ if and only if $\langle w, (\C_{11}-\hat\sigma^2\I) w\rangle>0$ for all $w\in\R^d$, or
  \begin{align*}
    0< \min_{w\in\R^d} \frac{\langle w, \C_{11}w\rangle}{\langle w,w\rangle} -\hat\sigma^2 = \lambda_{\min} (\C_{11}) - \hat\sigma^2.
  \end{align*}
\end{proof}

Eq.~\eqref{eq:23} is a condition on the amount of variation in $\Y$ explained by the confounders $\Z$, with $\lambda_{\min}(\C_{11})$ being (proportional to) the minimum amount of variation explained by any of the dimensions spanned by the columns of $\Z$, and $\frac1{n-d}\tr(\C_{22})$ being the average amount of variation explained by the dimensions orthogonal to  the columns of $\Z$. Failure of this condition simply means that there must be other, latent variables that explain more variation than the known ones, which is precisely what we are seeking to detect.

A useful special case of Theorem~\ref{thm:no-latent} occurs when the number of confounders equals one. In this case, we are seeking maximum-likelihood solutions for $\K$ of the form
\begin{align*}
  \K = \beta^2 z z^T + \sigma^2\I,
\end{align*}
where $z\in\R^n$ is the confounding data vector.  Let $\gamma^2=\|z\|^2$ and $u=\frac1\gamma z$. Then $\P_z=uu^T$ is the projection matrix onto $z$, $\C_{11}=\langle u,\C u\rangle$, and $\tr(\C_{22})=\tr((1-\P_z)\C)=\tr(\C)-\langle u,\C u\rangle$. By Theorem~\ref{thm:no-latent}, we have
\begin{align}
  \hat\beta^2 &= \frac1{\gamma^2} \biggl\{\langle u,\C u\rangle - \frac{\tr\bigl([1-\P_z]\C\bigr)}{n-1} \biggr\}\nonumber\\
  &=  \frac1{\gamma^2} \biggl\{\frac{n}{n-1} \langle u,\C u\rangle - \frac{\tr(\C)}{n-1} \biggr\}\label{eq:34}\\
  \hat\sigma^2 &= \frac{\tr\bigl([1-\P_z]\C\bigr)}{n-1} = \frac{\tr(\C)-\langle u,\C u\rangle}{n-1}, \nonumber
\end{align}
provided
\begin{align*}
  \langle u,\C u\rangle > \frac{\tr(\C)}n.
\end{align*}

\section{Solution of the model without known covariates}
\label{sec:solut-model-no-known}

Next, consider a model without known covariates, i.e.\ with posterior sample covariance matrix $\K=\K_X(\{\alpha_j, x_j\})+\sigma^2\I$, where
\begin{align*}
  \K_X\bigl(\{\alpha_j, x_j\}\bigr) &= \sum_{j=1}^p \alpha^2_j x_j x_j^T.
\end{align*}
This model is equivalent to probabilistic principal component analysis \cite{tipping1999probabilistic,lawrence2005probabilistic}, and its maximum-likelihood solution is given by the first $p$ eigenvectors or principal components with largest eigenvalues of $\C$. Here we present a more direct proof of this fact than what can be found in the literature.

\begin{lem}\label{lem:symm}
  Without loss of generality, we may assume that the latent variables have unit norm, are linearly independent, and are mutually orthogonal.
\end{lem}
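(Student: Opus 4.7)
The plan is to exploit the fact that the log-likelihood \eqref{eq:2} depends on the latent parameters $(\X,\A)$ only through the low-rank piece
\[ \K_X = \X\A\X^T = \sum_{j=1}^p \alpha_j^2\, x_j x_j^T \]
of the covariance $\K = \K_X + \sigma^2\I$, and to absorb the two built-in redundancies of this parameterization: the scale of each column of $\X$, and the choice of rank decomposition of the symmetric matrix $\K_X$. Any reparameterization that leaves $\K_X$ invariant leaves $\K$ invariant, hence the likelihood, so every such move is legitimate.

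For the unit-norm claim I would argue by column-wise rescaling. If $x_j = 0$, its rank-one contribution to $\K_X$ vanishes and the column can be dropped. Otherwise set $c_j = \|x_j\|$ and replace $(x_j,\alpha_j^2)$ by $(c_j^{-1} x_j,\, c_j^2 \alpha_j^2)$; the product $\alpha_j^2 x_j x_j^T$ is invariant, $\A$ remains diagonal, and the new column has unit norm.

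For orthogonality and linear independence I would apply the spectral theorem to $\K_X$, which is symmetric positive semi-definite of some rank $r \leq p$, to obtain
\[ \K_X = \sum_{k=1}^r \lambda_k\, u_k u_k^T, \qquad \lambda_1 \geq \cdots \geq \lambda_r > 0, \]
with $\{u_k\}$ orthonormal. Taking $x'_k = u_k$ and $(\alpha'_k)^2 = \lambda_k$ reproduces $\K_X$ exactly, keeps $\A$ diagonal, and yields orthonormal latent variables, which are automatically linearly independent.

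The only bookkeeping subtlety, rather than a genuine obstacle, is that this reduction may lower the effective count of latent variables from $p$ to $r = \rnk(\K_X)$. This is harmless: one may pad with up to $p - r$ further vectors chosen orthonormal to $u_1,\dots,u_r$ and assigned weight $\alpha^2 = 0$, which leaves $\K_X$ and the likelihood untouched while preserving the nominal count of $p$ orthonormal latent variables. Beyond that, both claims reduce to well-known symmetries of the $\X\A\X^T$ parameterization and require no further work.
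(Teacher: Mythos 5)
Your proof is correct and follows essentially the same route as the paper's: both rest on the observation that the likelihood depends on $(\X,\A)$ only through $\K_X=\X\A\X^T$, handle unit norms by the same column rescaling, and obtain orthonormal latent variables from the eigendecomposition of $\K_X$ (the paper reaches that decomposition via the SVD of $\X\A^{\frac12}$, which yields exactly your spectral decomposition $\K_X=\sum_j \xi_j^2 u_j u_j^T$). Your zero-weight padding to preserve the nominal count $p$ is a minor addition the paper omits, since it simply lets the number of latent variables drop to $\rnk(\K_X)$.
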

\begin{proof}
  If the latent variables do not have unit norm, define $c_j=\|x_j\|^{-1}$, $\alpha'_j=\alpha_j/c_j$ and $x'_j=c_jx_j$ for all $j$. It follows immediately that $\|x'_j\|=1$ and
  \begin{align*}
    \K_X\bigl(\{\alpha_j, x_j\}\bigr) = \K_X\bigl(\{\alpha'_j, x'_j\}\bigr).
  \end{align*}
  Next assume that the  latent variables are not linearly independent, i.e. that $\rnk(\K_X)=r<p$. Because $\K_X$ is a symmetric matrix, we must have $\K_X = \sum_{l=1}^r t_l t_l^T$ for some set of linearly independent vectors $t_l\in\R^n$. Define $\alpha'_l=\|t_l\|$ and $x'_l=t_l/\|t_l\|$. Then $x'_l$ has unit norm and
  \begin{align*}
    \K_X\bigl(\{\alpha'_l,x'_l\}\bigr) = \K_X\bigl(\{\alpha_j, x_j\}\bigr).
  \end{align*}
  Finally, recall that
  \begin{align*}
    \K_X(\{\alpha_j, x_j\})=\X\A\X^T = (\X\A^{\frac12})  (\X\A^{\frac12})^T,
  \end{align*}
  where $\A=\diag(\alpha_1^2,\dots,\alpha_p^2)$. Because we may now assume that $\rnk(\X)=p$, and because $\alpha_j>0$ for all $j$, the matrix $\X\A^{\frac12}$ has singular value decomposition 
\begin{align*}
    \X\A^{\frac12} = \U\Xi \V^T
  \end{align*}
  with $\U\in\R^{n\times p}$, $\U^T\U=\I$, $\Xi\in\R^{p\times p}$ diagonal with diagonal elements $\Xi_{jj}=\xi_j>0$, and $\V\in\R^{p\times p}$, $\V^T\V=\V\V^T=\I$. Hence
  \begin{align*}
    \K_X\bigl(\{\alpha_j,x_j\}) = \U \Xi^2 \U^T = \sum_{j=1}^p \xi_j^2 u_j u_j^T = \K_X\bigl(\{\xi_j,u_j\}),
  \end{align*}
  with $u_j$ the orthonormal columns of $\U$, $\langle u_j, u_j'\rangle=(\U^T\U)_{jj'}=\delta_{j,j'}$.
\end{proof}

We will also need the following simple result:
\begin{lem}\label{lem:ev}
  Let $\lambda_1\geq\lambda_2\geq\dots\geq\lambda_n>0$ be a decreasing sequence of positive numbers, and let $1\leq p< n$. If there exists $j>p$ such that $\lambda_p>\lambda_j$, then
  \begin{equation}\label{eq:28}
    \lambda_p>\frac{1}{n-p}\sum_{j=p+1}^n \lambda_j.
  \end{equation}
\end{lem}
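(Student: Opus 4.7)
The plan is to exploit the monotonicity of the sequence directly: each of the $n-p$ terms $\lambda_{p+1}, \dots, \lambda_n$ is bounded above by $\lambda_p$, so the average on the right-hand side of \eqref{eq:28} is at most $\lambda_p$, and I only need the strict-inequality hypothesis to upgrade this to a strict bound.

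First I would write, for every $j\in\{p+1,\dots,n\}$, $\lambda_j\leq \lambda_p$ (this is immediate from the ordering assumption). Summing these $n-p$ inequalities yields
\begin{equation*}
\sum_{j=p+1}^n \lambda_j \;\leq\; (n-p)\,\lambda_p.
\end{equation*}
This already gives the weak version of \eqref{eq:28}; it remains to promote it to a strict inequality.

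Next I would invoke the hypothesis: by assumption there is at least one index $j^\star\in\{p+1,\dots,n\}$ with $\lambda_{j^\star}<\lambda_p$. Splitting the sum into the single term $j=j^\star$ and the remaining $n-p-1$ terms, and bounding the latter by $\lambda_p$ each and the former strictly by $\lambda_p$, I obtain
\begin{equation*}
\sum_{j=p+1}^n \lambda_j \;=\; \lambda_{j^\star} + \sum_{\substack{j=p+1\\ j\neq j^\star}}^n \lambda_j \;<\; \lambda_p + (n-p-1)\,\lambda_p \;=\; (n-p)\,\lambda_p.
\end{equation*}
Dividing by $n-p>0$ yields \eqref{eq:28}.

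There is no real obstacle here; the only thing to be careful about is that the hypothesis supplies a strict inequality at some index strictly greater than $p$ (not at $p$ itself), and that $n-p\geq 1$ so the division is legal. The lemma is essentially the observation that a strict drop below a monotone bound forces the mean to drop below that bound as well.
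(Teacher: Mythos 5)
Your proof is correct and is essentially the same argument as the paper's: the paper writes $\lambda_p-\frac{1}{n-p}\sum_{j=p+1}^n \lambda_j = \frac{1}{n-p}\sum_{j=p+1}^n(\lambda_p-\lambda_j)$ and notes each term is non-negative with at least one strictly positive, which is just your termwise comparison plus the strict drop at $j^\star$ rearranged.
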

\begin{proof}
  Eq.~\eqref{eq:28} follows from
  \begin{align*}
    \lambda_p-\frac{1}{n-p}\sum_{j=p+1}^n \lambda_j &= \frac{1}{n-p}\sum_{j=p+1}^n (\lambda_p-\lambda_j)>0,
  \end{align*}
  because each term on the r.h.s. is non-negative, and at least one is strictly positive.
\end{proof}

\begin{thm}\label{thm:latent-only}
  Let $\C\in\R^{n\times n}$ be a positive definite matrix with eigenvalues $\lambda_1\geq\dots\geq \lambda_n$ and corresponding eigenvectors $u_1,\dots, u_n$, and let either $p=n$ or $1\leq p< n$ such that there exists $j>p$ with $\lambda_p>\lambda_j$. Then the maximum-likelihood solution 
  \begin{align*}
    \hat\K = \argmin_{\{\K\colon \K = \X\A\X^T +\sigma^2\I\}}  \log\det \K + \tr\bigl(\K^{-1}\C\bigr),
  \end{align*}
  is given by
  \begin{align*}
    \hat x_j &= u_j\\
    \hat\alpha _j^2 &= \lambda_j-\hat\sigma^2\\
    \hat\sigma^2 &= \frac{1}{n-p}\sum_{j=p+1}^n \lambda_j.
  \end{align*}
\end{thm}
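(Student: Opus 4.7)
The plan is to reduce the maximum-likelihood problem to a scalar optimization over the eigenvalues of $\K$ by exploiting the spectral decomposition of $\K$ together with the two preliminary lemmas. By Lemma \ref{lem:symm}, I would first assume without loss of generality that the columns of $\X$ are orthonormal. Completing $\X$ to an orthonormal basis $(\X, \X_\perp)$ of $\R^n$ lets me write
\begin{align*}
\K = \X(\A + \sigma^2 \I_p)\X^T + \sigma^2\, \X_\perp \X_\perp^T,
\end{align*}
so the eigenvalues of $\K$ are $\alpha_j^2 + \sigma^2$ (with eigenvectors $x_j$) for $j = 1,\dots,p$, together with the eigenvalue $\sigma^2$ of multiplicity $n-p$ on the orthogonal complement. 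In particular, $\log\det\K = \sum_{j=1}^p \log(\alpha_j^2+\sigma^2) + (n-p)\log\sigma^2$, which depends on $\X$ only through the scalar parameters.

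Ordering the $\alpha_j$ so that $\alpha_1 \ge \dots \ge \alpha_p \ge 0$, the ordered eigenvalues of $\K$ are $\kappa_j = \alpha_j^2 + \sigma^2$ for $j \le p$ and $\kappa_j = \sigma^2$ for $j > p$. Applying Lemma \ref{lem:von-Neumann} to $\tr(\K^{-1}\C)$ then yields
\begin{align*}
\tr(\K^{-1}\C) \;\ge\; \sum_{j=1}^p \frac{\lambda_j}{\alpha_j^2 + \sigma^2} + \frac{1}{\sigma^2}\sum_{j=p+1}^n \lambda_j,
\end{align*}
with equality when the eigenvectors of $\K$ are aligned with the corresponding eigenvectors of $\C$ in matching eigenvalue order. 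Because $\log\det\K$ is independent of the orthonormal frame $\X$, this lower bound can be saturated without affecting the log-determinant: taking $\hat x_j = u_j$ for $j = 1,\dots,p$ achieves equality (the remaining $n-p$ directions lie in the degenerate $\sigma^2$-eigenspace of $\K$ and are unconstrained).

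What remains is the decoupled scalar optimization of
\begin{align*}
\L(\A, \sigma^2) = \sum_{j=1}^p \log(\alpha_j^2+\sigma^2) + (n-p)\log\sigma^2 + \sum_{j=1}^p \frac{\lambda_j}{\alpha_j^2+\sigma^2} + \frac{1}{\sigma^2}\sum_{j=p+1}^n \lambda_j.
\end{align*}
Setting $\partial\L/\partial\alpha_j^2 = 0$ gives $\alpha_j^2 + \sigma^2 = \lambda_j$, i.e.\ $\hat\alpha_j^2 = \lambda_j - \sigma^2$; substituting back collapses the terms involving the $\alpha_j$ to the constant $\sum_{j\le p}\log\lambda_j + p$, after which $\partial\L/\partial\sigma^2 = 0$ becomes $(n-p)/\sigma^2 = \sigma^{-4}\sum_{j>p}\lambda_j$, yielding the claimed $\hat\sigma^2 = \frac{1}{n-p}\sum_{j>p}\lambda_j$.

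The main obstacle, and the only place the theorem's hypothesis enters, is verifying feasibility $\hat\alpha_j^2 \ge 0$ for every $j \le p$. Since $\lambda_j \ge \lambda_p$ for $j \le p$, this reduces to the single inequality $\lambda_p > \hat\sigma^2$, which is precisely the conclusion of Lemma \ref{lem:ev} applied to $\lambda_1 \ge \dots \ge \lambda_n$ under the assumption that some $\lambda_j < \lambda_p$ for $j > p$. This gives $\hat\alpha_p^2 > 0$ and hence $\hat\alpha_j^2 > 0$ for all $j \le p$. The boundary case $p = n$ is handled separately: the last two sums are empty, and the argument degenerates to $\hat\alpha_j^2 = \lambda_j$ with $\hat\sigma^2$ effectively vanishing.
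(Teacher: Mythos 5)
Your proof is correct and takes essentially the same route as the paper's: orthonormalize $\X$ via Lemma~\ref{lem:symm}, diagonalize $\K$ to split $\log\det\K$ into scalar terms, apply von Neumann's inequality (Lemma~\ref{lem:von-Neumann}) to bound $\tr(\K^{-1}\C)$ and saturate it by setting $\hat x_j = u_j$, minimize the resulting scalar function over $\alpha_j^2$ and $\sigma^2$, and invoke Lemma~\ref{lem:ev} for feasibility. Your explicit remarks on the degenerate $\sigma^2$-eigenspace and the $p=n$ boundary case are minor refinements of points the paper leaves implicit.
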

\begin{proof}
  By Lemma~\ref{lem:symm}, we can assume that $\X$ has orthonormal columns, and hence there exist $\V\in\R^{n\times (n-p)}$ such that $\Q=(\X,\V)\in\R^{n\times n}$ is unitary, $\Q^T\Q=\Q\Q^T=\I$. Hence $\K=\X\A\X^T+\sigma^2\I$ has the spectral decomposition
  \begin{align*}
   \K  =
  \begin{pmatrix}
    \X & \V   
  \end{pmatrix}
  \begin{pmatrix}
    \A^2 + \sigma^2\I& 0\\
    0 & \sigma^2\I
  \end{pmatrix}
  \begin{pmatrix}
    \X^T\\
    \V^T
  \end{pmatrix},
  \end{align*}
  and hence
  \begin{equation*}
    \K^{-1} = \sum_{j=1}^p \frac1{\alpha_j^2+\sigma^2} x_j x_j^T + \frac1{\sigma^2} \sum_{l=1}^{n-p} v_l v_l^T,
  \end{equation*}
  where $v_l\in\R^n$ are the columns of $\V$.

  Assume that the $\alpha_j^2$ are ordered, $\alpha_1^2\geq \dots\geq \alpha_p^2$.
  % , such that the ordered eigenvalues of $\K^{-1}$ are
  % \begin{align*}
  %   \frac1{\sigma^2} \geq \dots \geq \frac1{\sigma^2} \geq \frac1{\alpha_p^2+\sigma^2} \geq \dots \geq \frac1{\alpha_1^2+\sigma^2}.
  % \end{align*}
  Applying von Neumann's Lemma \ref{lem:von-Neumann} gives
  \begin{align}
    \L &= \log\det(\K) + \tr\bigl(\K^{-1}\C\bigr) \nonumber\\
    &\geq \sum_{j=1}^p \log(\alpha_j^2+\sigma^2) +(n-p)\log(\sigma^2) + \sum_{j=1}^p \frac{\lambda_j}{\alpha_j^2+\sigma^2} + \sum_{j=p+1}^n \frac{\lambda_j}{\sigma^2}, \label{eq:26}
  \end{align}
  with equality if and only if
  \begin{align*}
    x_j &= u_j \text{ for }  j=1,\dots,p\\
    v_l &= u_{p+l} \text{ for }  l=1,\dots, n-p      
  \end{align*}
  Hence, independent of the values for $\alpha_j$, the maximum-likelihood latent variables are the eigenvectors of $\C$ corresponding to the  $p$ largest eigenvalues. Minimizing eq,~\eqref{eq:26} w.r.t. $\alpha_j^2$ and $\sigma^2$ then gives
  \begin{align*}
    \alpha_j^2 &= \lambda_j - \sigma^2\\
    \sigma^2 &= \frac1{n-p} \sum_{j=p+1}^N \lambda_j.
  \end{align*}
  By Lemma~\ref{lem:ev}, $\alpha_j^2>0$ for all $j$.
\end{proof}

Note that plugging the maximum-likelihood values in the likelihood function gives
\begin{equation}\label{eq:27}
  \L_{\min} = \sum_{j=1}^p \log(\lambda_j) + (n-p) \log\Bigl(\frac1{n-p}\sum_{j=p+1}^n \lambda_j\Bigr) + n
\end{equation}
Either $p$ can be set \textit{a priori} small enough such that condition~\eqref{eq:28} is satisfied, or else the value of $p$ with smallest $\L_{\min}$ satisfying this condition can be found easily from eq.~\eqref{eq:27}.

Note also that in the models of \cite{tipping1999probabilistic,lawrence2005probabilistic}, uniform prior variances are assumed ($\alpha_1^2=\dots=\alpha_p^2=1$), such that $\X$ is defined upto an arbitrary rotation, because $\X\X^T=(\X\Rm)(\X\Rm)^T$ for any rotation matrix $\Rm$. In our model, there is no such rotational freedom (if $\A$ is assumed to be diagonal), except if $\C$ has eigenvalues with multiplicities greater than one, when there is some freedom to choose the corresponding eigenvectors.

\section{Solution of the full model}
\label{sec:solution-full-model}

\subsection{Orthogonality of known and hidden confounders}
\label{sec:orth-known-hidd}

% We can now put the results from the previous sections together to derive the solution of the full model with both known and latent variance components. First we prove:

\begin{lem}\label{lem:ortho}
  Without loss of generality, we may assume that the latent variables are orthogonal to the known confounders:
  \begin{equation}
    \label{eq:30}
    \X^T\Z = \Z\X^T = 0.
  \end{equation}
\end{lem}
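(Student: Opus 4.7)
The plan is to exhibit an explicit reparameterization of the model~\eqref{eq:1} under which the hidden factors become orthogonal to the known confounders while the distribution of each $y_i$ (and hence the likelihood) is preserved. Since the columns of $\Z$ may be taken linearly independent (cf.\ the opening remark of Section~\ref{sec:solut-model-no-latent}), project each column of $\X$ onto $\Hs$ and its orthogonal complement by writing
\begin{equation*}
  \X = \Z\M + \X', \qquad \M = (\Z^T\Z)^{-1}\Z^T\X \in \R^{d\times p},
\end{equation*}
so that $\Z^T\X' = 0$. Substituting into eq.~\eqref{eq:1} gives
\begin{equation*}
  y_i = \Z v_i + \Z\M w_i + \X' w_i + \epsilon_i = \Z v'_i + \X' w'_i + \epsilon_i,
\end{equation*}
with $v'_i := v_i + \M w_i$ and $w'_i := w_i$.

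The pair $(v'_i, w'_i)$ is an invertible block-triangular linear transformation of $(v_i, w_i)$,
\begin{equation*}
  \begin{pmatrix} v'_i \\ w'_i \end{pmatrix} = \begin{pmatrix} \I & \M \\ 0 & \I \end{pmatrix} \begin{pmatrix} v_i \\ w_i \end{pmatrix},
\end{equation*}
so by Lemma~\ref{lem:affine-gaussian} it is jointly Gaussian with mean zero and covariance
\begin{equation*}
  \Psi' = \begin{pmatrix} \B + \M\D^T + \D\M^T + \M\A\M^T & \D + \M\A \\ \D^T + \A\M^T & \A \end{pmatrix}.
\end{equation*}
Invertibility of the transformation matrix ensures $\Psi'$ is positive semi-definite with the same rank as the original $\Psi$, so the reparameterized model is a genuine instance of the model family~\eqref{eq:1}--\eqref{eq:7}, now with latent variables $\X'$ satisfying $\Z^T\X' = 0$.

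It remains to verify that the posterior sample covariance $\K$ in eq.~\eqref{eq:3} — and therefore the likelihood~\eqref{eq:2} — is unchanged by the reparameterization. I would do this by direct substitution of $\X = \Z\M + \X'$ together with $(\B',\D',\A') = (\B + \M\D^T + \D\M^T + \M\A\M^T,\ \D + \M\A,\ \A)$ into the four terms $\Z\B'\Z^T$, $\Z\D'\X'^T$, $\X'\D'^T\Z^T$ and $\X'\A\X'^T$, and observing that all $\M$-dependent contributions cancel pairwise, leaving exactly the original expression~\eqref{eq:3}. This cancellation is the only piece of real computation; the main obstacle is purely bookkeeping. One final observation worth recording is that even if one starts from a model with $\D = 0$ one generically ends up with $\D' = \M\A \neq 0$, which explains why allowing a non-zero cross-covariance $\D$ in~\eqref{eq:7} is indispensable for the existence of an equivalent orthogonal representation.
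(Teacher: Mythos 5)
Your proof is correct and takes essentially the same route as the paper's: decompose $\X = \Z\M + \X'$ by projecting onto the column space of $\Z$, absorb the overlap $\M w_i$ into the effects of the known confounders, and invoke Lemma~\ref{lem:affine-gaussian} to conclude that the transformed effects $(v_i + \M w_i,\, w_i)$ remain jointly Gaussian of the form required by eq.~\eqref{eq:7}, so the rewritten model is an instance of the same family with $\Z^T\X'=0$; the final verification that $\K$ is unchanged is indeed automatic, since $y_i$ is literally the same random variable after regrouping. As a side note, your cross-covariance block $\D + \M\A$ is the dimensionally correct expression ($\M\A \in \R^{d\times p}$), whereas the paper's displayed $\D + \A\M^T$ is a typo with the transposed blocks interchanged.
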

\begin{proof}
  As in Section~\ref{sec:solut-model-no-latent}, let $\Pb$ again be the projection matrix on the space spanned by the known covariates $z_k$ (i.e. the columns of $\Z$). For any choice of latent variables $x_j$, we have
\begin{align*}
  x_j = \Pb x_j + (1-\Pb) x_j = \sum_{k=1}^d m_{kj} z_k + \tilde x_j,
\end{align*}
for some matrix of linear coefficients $\M=(m_{kq})\in\R^{d\times p}$, and with $\langle s_k,\tilde x_j\rangle=0$ for all $k$. Or, in matrix notation
\begin{align*}
  \X = \Z\M + \tilde\X,\quad\text{ with }\quad \tilde\X^T\Z = \Z^T\tilde\X=0
\end{align*}

Plugging this in eq.~\eqref{eq:1}, results in
\begin{equation}\label{eq:29}
  y_i = \Z\tilde v_i +\tilde \X w_i + \epsilon_i
\end{equation}
where $\tilde v_i = v_i + \M w_i $. Hence
\begin{align*}
  \begin{bmatrix}
    \tilde v_i\\ w_i\\ \epsilon_i
  \end{bmatrix}
  =
  \begin{bmatrix}
    \I &\ M & 0\\
    0 & \I & 0\\
    0 & 0 & \I
  \end{bmatrix}
  \begin{bmatrix}
     v_i\\ w_i\\ \epsilon_i
   \end{bmatrix}
\end{align*}
and hence, using Lemma~\ref{lem:affine-gaussian}, it follows that
\begin{align*}
  p\left(\begin{bmatrix}
    \tilde v_i\\ w_i\\ \epsilon_i
  \end{bmatrix}
  \right) &= \N\left(0,
            \begin{bmatrix}
              \B + \M\D^T + \D\M^T + \M\A\M^T & \D+\A\M^T & 0\\
              \D^T+ \M\A & \A & 0\\
              0 & 0 & \sigma^2\I
            \end{bmatrix}\right).
\end{align*}
This  is still of exactly the same form as eq.~\eqref{eq:40}. Hence model~\eqref{eq:29} is identical to model~\eqref{eq:1}, but has hidden covariates orthogonal to the known covariates. 
\end{proof}

Note that we can parameterize the model with hidden variables orthogonal to the known confounders, $\Z^T\X=0$, but only if we allow the covariances of their effects on gene expression, $\cov(v_i,w_i)=\D$, to be non-zero. Equivalently, we can parameterize the model such that the random effects of hidden variables are statistically independent of the effects of the known confounders, $\cov(v_i,w_i)=0$, but only if we allow the hidden variables to overlap with the known confounders, $\Z^T\X\neq 0$. Mathematically, the choice of orthogonal hidden factors will be much more convenient.

Note also that a transformation to orthogonal hidden factors always induces non-zero covariances among the known confounders via the term $\M\A\M^T$. Hence an important difficulty with the model where $\B$ is assumed to be diagonal, as used in \cite{fusi2012joint}, comes from the fact that non-orthogonal hidden variables are needed to model off-diagonal covariances between the known confounders. It is much more intuitive to model these directly by assuming a general covariance matrix. 

\subsection{Restricted maximum-likelihood solution for the latent variables}
\label{sec:restr-maxim-likel}

\begin{lem}\label{lem:symm-2}
  Without loss of generality, we may assume that the latent variables have unit norm, are linearly independent, and are mutually orthogonal.
\end{lem}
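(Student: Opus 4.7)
The plan is to mirror the proof of Lemma~\ref{lem:symm} in the new setting where, by Lemma~\ref{lem:ortho}, the latent variables already satisfy $\X^T\Z=0$. The key observation is that each reparameterization used in Lemma~\ref{lem:symm} acts on $\X$ by right multiplication $\X\to\X\M$ (with $\M$ either invertible or a rank-reducing rectangular matrix in the linear-independence step). Hence every transformed latent variable is a linear combination of columns of $\X$, and its column span is contained in that of the original $\X$. Since by hypothesis the latter lies in $\Hs^\perp$, the orthogonality $\X^T\Z=0$ is automatically inherited by the new latent variables, and no further work is needed to maintain compatibility with Lemma~\ref{lem:ortho}.

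Concretely, I would carry out the three steps of Lemma~\ref{lem:symm} in order. First, rescale each column $x_j$ by $1/\|x_j\|$ to obtain unit-norm columns, absorbing the rescaling factors into $\A$ and $\D$. Second, if $\rnk(\X)=r<p$, choose $\tilde\X\in\R^{n\times r}$ whose columns form an orthonormal basis of the column span of $\X$, write $\X=\tilde\X\M$, and update the covariance parameters to $\M\A\M^T$ and $\D\M^T$ so as to preserve the value of $\K$ in eq.~(\ref{eq:3}). Third, to enforce mutual orthogonality, apply the thin SVD $\X\A^{1/2}=\U\Xi\V^T$ and replace $\X$ by $\U$, which has orthonormal columns, while simultaneously replacing $\A$ by $\Xi^2$ and updating $\D$ by the corresponding right-multiplication factor.

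Beyond the geometric content of Lemma~\ref{lem:symm}, the only additional bookkeeping is to propagate the cross-covariance matrix $\D$ through each reparameterization. Under $\X\to\X\M$ with $\M$ invertible, preservation of $\K$ forces $\A\to\M^{-1}\A\M^{-T}$ and $\D\to\D\M^{-T}$, while $\B$ and $\sigma^2$ are unchanged. The positive semi-definiteness of the joint covariance matrix $\Psi$ in eq.~(\ref{eq:7}) is preserved because the induced transformation on $\Psi$ is a congruence by the invertible block matrix $\diag(\I,\M^{-T})$. In the rank-reducing step the congruence is by a non-square matrix, but positive semi-definiteness is still preserved since $v^T\Psi' v = (\M^T v)^T \Psi (\M^T v)\geq 0$ for suitable $v$. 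After the SVD step, the new $\A'=\Xi^2$ is automatically diagonal, as required by the model assumption.

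The hard part will simply be keeping the algebraic bookkeeping straight for $\D$; no new analytic idea is needed beyond the argument of Lemma~\ref{lem:symm} together with the (trivial but crucial) remark that right multiplication of $\X$ keeps its column span inside $\Hs^\perp$.
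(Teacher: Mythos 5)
Your proposal is correct and takes the same route as the paper, whose proof consists of the single remark that the argument of Lemma~\ref{lem:symm} carries over verbatim because each transformation $\X\to\X\M$ leaves the off-diagonal term $\Z\D\X^T$ of $\K$ form-invariant, merely reparameterizing $\D$ (as $\D\to\D\M^{-T}$ in your notation). Your extra bookkeeping --- preservation of positive semi-definiteness of $\Psi$ under the block-diagonal congruence, and preservation of $\X^T\Z=0$ under right multiplication --- is precisely the ``straightforward to verify'' content the paper leaves implicit.
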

\begin{proof}
  The proof is identical to the proof of Lemma~\ref{lem:symm} -- it is
  straightforward to verify that the transformation to orthonormal
  variables also do not change the form of the off-diagonal term
  $\Z\D\X^T$ in the covariance matrix $\K$, but merely lead to a
  reparameterization of the matrix $\D$.
\end{proof}

To solve the full model, we follow an approach similar to the standard restricted maximum-lilelihood method for linear mixed models \cite{patterson1971recovery,gumedze2011parameter}: we write the negative log-likelihood function $\L=\log\det(\K)+\tr(\K^{-1}\C)$ as a sum
\begin{equation}\label{eq:44}
  \L = \L_1 + \L_2,
\end{equation}
where $\L_2$ will be the log-likelihood restricted to the subspace
orthogonal to the known confounders $\Z$. We will estimate the latent
variables $\X$ and their effect covariances $\A$ by maximizing $\L_2$,
and estimate the effect covariances $\B$ and $\D$ involving the known
confounders by maximizing $\L_1$. Solving for the latent variables on a restricted subspace is motivated by the observation that if $y\in\R^n$ is a sample from the model~(\ref{eq:1}), that is, $p(y)=\N(0,\K)$, then
\begin{align*}
  \U_2\U_2^T y = \U_2\U_2^T\Z v + \U_2\U_2^T\X w + \U_2\U_2^T\epsilon = \X w + \epsilon'.
\end{align*}
In other words, restricted to the subspace orthogonal to $\Z$, the general model becomes a probablistic PCA model where all variation in the data is explained by the latent variables. 

To obtain the decomposition (\ref{eq:44}), we partition $y\in\R^n$ as $y=(y_1,y_2)^T$, where $y_1=\U_1^Ty\in\R^d$ and $y_2=\U_2^Ty\in\R^{n-d}$, and write
\begin{align*}
  p(y) &= p(y_1,y_2) = p(y_1\mid y_2) p(y_2),\\
  \intertext{or}
  \log p(y) &= \log p(y_1,y_2) = \log p(y_1\mid y_2) + \log p(y_2).
\end{align*}
Hence
\begin{align*}
  \L = -\frac{2}{m} \sum_{i=1}^m \log p(y_i) = \underbrace{-\frac{2}{m} \sum_{i=1}^m \log p(y_{i1}\mid y_{i2})}_{\L_1} \underbrace{-\frac{2}{m} \sum_{i=1}^m \log p(y_{i2})}_{\L_2}
\end{align*}

Using standard results for the marginal and conditional distributions of a multivariate Gaussian, we have
\begin{align*}
  p(y_2) &= \N(0,\K_{22})\\
  p(y_1\mid y_2) &= \N\bigl(\K_{12}\K_{22}^{-1}y_2, (\K_{11}-\K_{12}\K_{22}^{-1}\K_{21})\bigr),
\end{align*}
where we used the partitioned matrix notation of eq.~(\ref{eq:31}).
In particular,
\begin{align*}
  \L_2 &= \log\det(\K_{22}) + \frac{1}m \sum_{i=1}^m \langle \U_2^T y_i, \K_{22}^{-1} \U_2^T y_i \rangle\\
  &= \log\det(\K_{22}) + \frac{1}m \sum_{i=1}^m \tr\bigl(\K_{22}^{-1} \U_2^T y_i y_i^T\U_2\bigl)\\
  &= \log\det(\K_{22}) + \tr\bigl(\K_{22}^{-1}\C_{22}\bigr).
\end{align*}
Note that $\K_{22}=\U_2^T\X\A\X^T\U_2^T + \sigma^2\I$, and hence $\L_2$ depends only on $\X$, $\A$ and $\sigma^2$. The restricted maximum likelihood solution for the latent variables follows immediately:

\begin{thm}\label{thm:lvreml}
  Let $\hat\X\in\R^{n\times p}$, $\hat A\in\R^{d\times d}$,
  and $\hat\sigma^2$ be the solution of
  \begin{align*}
    \{\hat \X,\hat \A, \hat \sigma^2\} = \argmin_{\X,\A, \sigma^2} \L_2(\X,\A,\sigma^2),
  \end{align*}
  where the minimum is taken over all $\X$ with $\X^T\Z=0$, and all positive semi-definite  diagonal matrices $\hat \A$. If there exists $j>p$ such that $\lambda_p>\lambda_j$, then 
\begin{align}
    \hat\X &=\U_2\W_p\label{eq:6}\\
    \hat\A &= \diag(\lambda_1 - \hat\sigma^2, \dots, \lambda_p - \hat\sigma^2)\label{eq:9}\\     
    \hat\sigma^2 &= \frac1{n-d-p}\sum_{j=p+1}^{n-d} \lambda_j\label{eq:13}
  \end{align}
  where $\lambda_1\geq \lambda_2\geq \dots\geq \lambda_{n-d}$ are the
  sorted eigenvalues of $\C_{22}$ with corresponding eigenvectors
  $w_1,\dots,w_{n-d}\in\R^{n-d}$, and
  $\W_p=(w_1,\dots,w_p)\in\R^{(n-d)\times p}$ is the matrix with the
  first $p$ eigenvectors of $\C_{22}$ as columns.
\end{thm}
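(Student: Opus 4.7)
The plan is to reduce the constrained minimization of $\L_2$ to the probabilistic PCA problem already solved in Theorem~\ref{thm:latent-only}, but applied to the reduced covariance matrix $\C_{22}$ on the orthogonal complement of the column space of $\Z$.

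First I would exploit the orthogonality constraint $\X^T\Z=0$ to reparameterize $\X$. Since the columns of $\U_2$ form an orthonormal basis for the orthogonal complement $\Hs^\perp$ of the column space of $\Z$, the constraint $\X^T\Z=0$ is equivalent to requiring each column of $\X$ to lie in $\Hs^\perp$, i.e.\ $\X = \U_2 \tilde\X$ for some $\tilde\X\in\R^{(n-d)\times p}$. By Lemma~\ref{lem:symm-2} I may further assume, without loss of generality, that the columns of $\X$ are orthonormal, which together with $\U_2^T\U_2=\I_{n-d}$ forces $\tilde\X^T\tilde\X=\I_p$.

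Next I would substitute this parameterization into the expression for $\K_{22}$ appearing in $\L_2$. Using $\U_2^T\U_2=\I_{n-d}$, one obtains
\begin{equation*}
  \K_{22} \;=\; \U_2^T\X\A\X^T\U_2 + \sigma^2\I \;=\; \tilde\X\A\tilde\X^T + \sigma^2\I,
\end{equation*}
so that the objective becomes
\begin{equation*}
  \L_2 \;=\; \log\det\bigl(\tilde\X\A\tilde\X^T + \sigma^2\I\bigr) + \tr\!\Bigl(\bigl(\tilde\X\A\tilde\X^T + \sigma^2\I\bigr)^{-1}\C_{22}\Bigr).
\end{equation*}
This is exactly the objective function of Theorem~\ref{thm:latent-only}, but with $\C$ replaced by the $(n-d)\times(n-d)$ matrix $\C_{22}$ and the ambient dimension $n$ replaced by $n-d$. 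Crucially, the parameters $\tilde\X$ and $\A$ may now be varied freely (subject only to $\tilde\X$ having orthonormal columns and $\A$ diagonal PSD), so the reduced problem is an unconstrained probabilistic PCA instance.

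Finally I would invoke Theorem~\ref{thm:latent-only}, whose hypothesis $\lambda_p>\lambda_j$ for some $j>p$ is exactly the hypothesis carried over to the eigenvalues $\lambda_1\geq\dots\geq\lambda_{n-d}$ of $\C_{22}$. That theorem yields $\hat{\tilde\X}=\W_p$, $\hat\A=\diag(\lambda_1-\hat\sigma^2,\dots,\lambda_p-\hat\sigma^2)$, and $\hat\sigma^2=\frac{1}{n-d-p}\sum_{j=p+1}^{n-d}\lambda_j$. Converting back via $\hat\X=\U_2\hat{\tilde\X}$ gives $\hat\X=\U_2\W_p$, matching \eqref{eq:6}–\eqref{eq:13}. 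The only delicate point — and really the main thing worth checking carefully — is the first step: that the substitution $\X=\U_2\tilde\X$ is both forced by the constraint $\X^T\Z=0$ and parameter-complete (no admissible $\X$ is excluded), and that after this substitution the image of $\tilde\X$ ranges over all configurations on which Theorem~\ref{thm:latent-only} is proved. Once that reduction is justified, the rest is an immediate quotation of the earlier result.
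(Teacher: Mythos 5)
Your proposal is correct and follows essentially the same route as the paper's own proof: the paper likewise sets $\tilde\X=\U_2^T\X$ (equivalent to your parameterization $\X=\U_2\tilde\X$ under the constraint $\X^T\Z=0$), observes that $\K_{22}=\tilde\X\A\tilde\X^T+\sigma^2\I$ turns $\L_2$ into the probabilistic PCA objective of Theorem~\ref{thm:latent-only} applied to $\C_{22}$, and then pulls the solution back via $\hat\X=\U_2\hat{\tilde\X}=\U_2\W_p$. The ``delicate point'' you flag is exactly the step the paper handles with the identity $\hat\X=(\U_1\U_1^T+\U_2\U_2^T)\hat\X=\U_2\U_2^T\hat\X$, so nothing is missing.
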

\begin{proof}
  Defining $\tilde\X=\U_2^T\X\in\R^{(n-d)\times p}$, we have $\K_{22}=\tilde \X\A\tilde\X^T + \sigma^2\I$, and $\L_2$ becomes precisely the minus log-likelihood of the model without known covariates (Section~\ref{sec:solut-model-no-known}), as a function of the latent variables $\tilde \X$ on the \emph{reduced} $(n-d)$-dimensional space orthogonal to the known confounders $\Z$. Hence by Theorem~\ref{thm:latent-only},
  \begin{align*}
    \hat{\tilde \X} &= \W_p\\
    \hat\A &= \diag(\lambda_1 - \sigma^2, \dots, \lambda_p - \sigma^2),
  \end{align*}
  where $\lambda_1\geq \lambda_2\geq \dots\geq \lambda_{n-d}$ are the sorted eigenvalues of $\C_{22}$ and $\W_p\in\R^{(n-d)\times p}$ is the matrix having the corresponding first $p$ eigenvectors as columns. Note that $\hat\A$ is positive semi-definite by Lemma~\ref{lem:ev} and the assumption that there exists $j>p$ such that $\lambda_p>\lambda_j$. It remains to `pull-back' $\tilde\X$ to the original $n$-dimensional space, using the orthogonality condition \eqref{eq:30}:
  \begin{align*}
    \hat\X = (\U_1\U_1^T + \U_2\U_2^T) \hat\X = \U_2\U_2^T \hat\X = \U_2 \hat{\tilde\X} = \U_2\W_p.
  \end{align*}
  This proves eqs.~\eqref{eq:6} and \eqref{eq:9}.
\end{proof}

\subsection{Solution for the variance parameters given the latent variables}
\label{sec:solut-vari-param}

With $\hat \X$, $\hat\A$ and $\hat\sigma^2$ determined by the minimization of $\L_2$ in Theorem~\ref{thm:lvreml}, $\L_2(\hat\X,\hat\A,\hat\sigma^2)$ is  constant in terms of the parameters $\B$ and $\D$ that remain to be optimized. Hence optimizing $\L_1$ with respect to these parameters is the same as optimizing the total negative log-likelihood $\L(\hat\X,\hat\A,\B,\D,\hat\sigma^2)$ w.r.t. $\B$ and $\D$. We have:

\begin{thm}\label{thm:full}
  Let $\hat\B\in\R^{d\times d}$ and $\hat D\in\R^{d\times (n-d)}$ be the solution of
  \begin{align*}
    \{\hat\B,\hat\D\} = \argmin_{\B,\D} \L_1(\hat\X,\hat\A,\B,\D,\hat\sigma^2) = \argmin_{\B,\D} \L(\hat\X,\hat\A,\B,\D,\hat\sigma^2),
  \end{align*}
  subject to the constraint that $\B$ and $\B-\D\hat\A^{-1}\D^T$ are positive semi-definite. If
  \begin{equation}\label{eq:17}
    \lambda_{\min}(C_{11})>\hat\sigma^2,
  \end{equation}
  then
  \begin{align}
    \hat \B &= \V \Gbf_1^{-1} (\C_{11}-\hat\sigma^2\I)  \Gbf_1^{-1} \V^T\label{eq:5}\\
    \hat\D &= \V \Gbf_1^{-1} \C_{12}\W_p
  \end{align}
  where as before
  \begin{align*}
    \Z =
    \begin{pmatrix}
      \U_1& \U_2      
    \end{pmatrix}
    \begin{pmatrix}
      \Gbf_1\\
      0 
    \end{pmatrix}
    \V^T
  \end{align*}
  is the singular value decomposition of $\Z$, and
  $\W_p=(w_1,\dots,w_p)\in\R^{(n-d)\times p}$ is the matrix with the
  first $p$ eigenvectors of $\C_{22}$ as columns.
\end{thm}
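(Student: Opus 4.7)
The plan is to reduce the problem to (the proof of) Theorem~\ref{thm:no-latent} applied to the augmented known-covariate matrix $\tilde\Z=(\Z,\hat\X)\in\R^{n\times(d+p)}$. The first observation is that $\L_2$ depends only on $(\X,\A,\sigma^2)$, which are all frozen at the values supplied by Theorem~\ref{thm:lvreml}; hence minimizing $\L_1$ over $(\B,\D)$ is equivalent to minimizing the full $\L$ over $(\B,\D)$, as already stated in the theorem. Writing $\tilde\B=\begin{pmatrix}\B & \D\\ \D^T & \hat\A\end{pmatrix}$, the model covariance is $\K=\tilde\Z\tilde\B\tilde\Z^T+\hat\sigma^2\I$, and the pair of PSD conditions in the hypothesis (on $\B$ and on $\B-\D\hat\A^{-1}\D^T$) is exactly the Schur-complement characterization of $\tilde\B\succeq 0$. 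The task therefore reduces to minimizing $\log\det(\K)+\tr(\K^{-1}\C)$ over PSD matrices $\tilde\B$ whose bottom-right block is constrained to equal $\hat\A$.

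Next, using $\hat\X^T\Z=0$ (Lemma~\ref{lem:ortho}) together with $\hat\X=\U_2\W_p$ and the orthonormality of the columns of $\W_p$, I would write down the thin SVD of $\tilde\Z$ in block form as $\tilde\Z=\tilde\U_1\tilde\Gbf_1\tilde\V^T$ with
\begin{align*}
\tilde\U_1=(\U_1,\;\U_2\W_p),\qquad \tilde\Gbf_1=\begin{pmatrix}\Gbf_1 & 0\\ 0 & \I_p\end{pmatrix},\qquad \tilde\V=\begin{pmatrix}\V & 0\\ 0 & \I_p\end{pmatrix}.
\end{align*}
Completing $\tilde\U_1$ to an orthonormal basis $\tilde\U=(\tilde\U_1,\tilde\U_2)$ of $\R^n$ and block-diagonalising $\K$ in this basis, the minus log-likelihood decomposes in exactly the same shape as in the proof of Theorem~\ref{thm:no-latent}. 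Applying von Neumann's Lemma~\ref{lem:von-Neumann} to the trace term on the top-left block aligns the eigenvectors and forces the minimizer to satisfy $\hat\K_{11}^{\mathrm{new}}=\tilde\C_{11}$, where $\tilde\C_{11}=\tilde\U_1^T\C\tilde\U_1$; crucially, in the proof of Theorem~\ref{thm:no-latent} the eigenvalue-matching step decouples entirely from $\sigma^2$, so the conclusion is unchanged when $\sigma^2$ is held fixed at $\hat\sigma^2$ rather than optimized. Solving $\tilde\Gbf_1\tilde\V^T\tilde\B\tilde\V\tilde\Gbf_1+\hat\sigma^2\I=\tilde\C_{11}$ then yields the closed form
\begin{align*}
\hat{\tilde\B}=\tilde\V\tilde\Gbf_1^{-1}(\tilde\C_{11}-\hat\sigma^2\I)\tilde\Gbf_1^{-1}\tilde\V^T.
\end{align*}

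The remaining work is bookkeeping on the blocks of $\hat{\tilde\B}$. Using the block structures of $\tilde\U_1$, $\tilde\V$ and $\tilde\Gbf_1$ above, the bottom-right block of $\tilde\C_{11}$ is $\W_p^T\C_{22}\W_p=\diag(\lambda_1,\dots,\lambda_p)$, so the bottom-right block of $\hat{\tilde\B}$ is $\diag(\lambda_i-\hat\sigma^2)$, which coincides with $\hat\A$ by Theorem~\ref{thm:lvreml}. Thus the constraint that the $(2,2)$-block equals $\hat\A$ is automatically satisfied at the unconstrained optimum and is never active. Reading off the $(1,1)$ and $(1,2)$ blocks of $\hat{\tilde\B}$ then produces the stated formulas $\hat\B=\V\Gbf_1^{-1}(\C_{11}-\hat\sigma^2\I)\Gbf_1^{-1}\V^T$ and $\hat\D=\V\Gbf_1^{-1}\C_{12}\W_p$.

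The step I expect to be the most delicate is the PSD accounting behind the existence condition \eqref{eq:17}. The argument above yields a minimizer $\hat{\tilde\B}$ that lies in the feasible cone only when $\tilde\C_{11}-\hat\sigma^2\I\succeq 0$, whereas the hypothesis only supplies $\lambda_{\min}(\C_{11})>\hat\sigma^2$. I would close this gap by a Schur-complement argument on $\tilde\C_{11}-\hat\sigma^2\I$: the bottom-right block is strictly positive by Lemma~\ref{lem:ev}, the top-left block is strictly positive by hypothesis, and a direct computation with the off-diagonal block $\C_{12}\W_p$ upgrades these two facts to PSD of the full block matrix. Positive semi-definiteness of $\hat{\tilde\B}$ then follows from invertibility of $\tilde\V\tilde\Gbf_1^{-1}$, and the two PSD conditions stated in the theorem are recovered by reading off the Schur-complement characterization of $\tilde\B\succeq 0$ in the other direction.
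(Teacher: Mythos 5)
Your argument is, in all essentials, the paper's own proof: freeze $(\hat\X,\hat\A,\hat\sigma^2)$, absorb $\hat\X$ into an augmented covariate matrix $\tilde\Z=(\Z,\hat\X)$, write its SVD in block form using $\hat\X=\U_2\W_p$ and $\hat\X^T\Z=0$, run the Theorem~\ref{thm:no-latent} machinery, observe that the $(2,2)$-block of the optimizer automatically equals $\hat\A$ so that constraint is inactive, and read off $\hat\B$ and $\hat\D$. Your one deviation --- holding $\sigma^2$ fixed at $\hat\sigma^2$ and re-running the eigenvalue-matching step, instead of letting $\A,\sigma^2$ float and verifying a posteriori that $\hat\sigma'^2=\hat\sigma^2$ and $\hat\A'=\hat\A$ as the paper does via $\U_3=\U_2\W_{\sim p}$ --- is a legitimate, equivalent variant, and your block bookkeeping is correct.

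The problem is your final paragraph. You rightly identify that the argument needs
\begin{equation*}
  \tilde\C_{11}-\hat\sigma^2\I=
  \begin{pmatrix}
    \C_{11}-\hat\sigma^2\I & \C_{12}\W_p\\
    \W_p^T\C_{21} & \diag(\lambda_1-\hat\sigma^2,\dots,\lambda_p-\hat\sigma^2)
  \end{pmatrix}\succeq 0,
\end{equation*}
since otherwise the unconstrained optimizer $\hat{\tilde\B}$ is not positive semi-definite, i.e.\ not feasible, and the ``read off the blocks'' conclusion collapses. But your proposed patch --- both diagonal blocks are positive definite, and ``a direct computation with the off-diagonal block upgrades these two facts to PSD of the full block matrix'' --- is false: positive definite diagonal blocks never by themselves imply positive semi-definiteness of a symmetric block matrix (consider $\bigl(\begin{smallmatrix}1&2\\2&1\end{smallmatrix}\bigr)$), and here the implication genuinely fails under hypothesis~\eqref{eq:17} alone. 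Take $d=p=1$ and $\C_{12}=\beta w_1^T$: positive definiteness of $\C$ requires only $\beta^2<\C_{11}\lambda_1$, whereas feasibility of the claimed solution, i.e.\ $\hat\B-\hat\D\hat\A^{-1}\hat\D^T\succeq 0$, requires $\beta^2\leq(\C_{11}-\hat\sigma^2)(\lambda_1-\hat\sigma^2)$, a strictly smaller bound whenever $\hat\sigma^2>0$; any $\beta^2$ between the two bounds satisfies \eqref{eq:17} yet makes the stated $(\hat\B,\hat\D)$ infeasible, so the required condition $\lambda_{\min}(\tilde\C_{11})\geq\hat\sigma^2$ is strictly stronger than \eqref{eq:17} and cannot be derived from it. To be fair, the paper's own proof has exactly the same hole: it invokes Theorem~\ref{thm:no-latent} for $\tilde\Z$ without checking the analogue of hypothesis~\eqref{eq:23}, which for the augmented model is precisely $\lambda_{\min}(\tilde\C_{11})>\hat\sigma^2$. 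So you located the delicate point correctly, but your fix does not --- and cannot --- close it without strengthening the hypothesis of the theorem.
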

\begin{proof}
  Note that the conditions $\B$ and $\B-\D\hat\A^{-1}\D^T$ positive semi-definite are to ensure that the matrix $\begin{pmatrix}
      \B & \D\\
      \D^T & \hat\A
    \end{pmatrix}$
  is positive semi-definite. Next note that with $\hat\X^T$ known, the covariance matrix $\K$ can be written as
  \begin{align*}
    \K =
    \begin{pmatrix}
      \Z & \hat\X
    \end{pmatrix}
    \begin{pmatrix}
      \B & \D\\
      \D^T & \hat\A
    \end{pmatrix}
    \begin{pmatrix}
      \Z^T \\ \hat\X^T
    \end{pmatrix} + \hat\sigma^2 \I
  \end{align*}
  Hence the total log-likelihood is identical to the model with known covariates $\tilde\Z = \begin{pmatrix}
      \Z & \hat\X
    \end{pmatrix}$ and no latent variables (Section~\ref{sec:solut-model-no-latent}). The \emph{unconstrained} maximizing solution (that is, where $\A$ and $\sigma^2$ are also optimized) for the model with known covariates  $\tilde\Z$ is given by Theorem~\ref{thm:no-latent}.  

    Due to $\hat\X^T\Z = 0$ and the definition of $\hat\X$, the singular value decomposition of $\tilde\Z$ is given by
    \begin{align*}
      \tilde\Z =
      \begin{pmatrix}
        \U_1& \hat \X & \U_3      
      \end{pmatrix}
      \begin{pmatrix}
        \Gbf_1 & 0\\
        0 & \I \\
        0 & 0
      \end{pmatrix}
      \begin{pmatrix}
        \V^T & 0\\
        0 & \I
      \end{pmatrix},
    \end{align*}
    where the columns of $\U_3\in\R^{n\times(n-d-p)}$ span the space
    orthogonal to the columns of $\tilde \Z$. Hence the unconstrained
    solution, can be written as
    (cf.~eqs.~(\ref{eq:21})--(\ref{eq:22}))
    \begin{align*}
      \begin{pmatrix}
        \hat\B & \hat\D\\
        \hat\D^T & \hat\A'
      \end{pmatrix} &=
      \begin{pmatrix}
        \V & 0\\
        0 & \I
      \end{pmatrix}
      \begin{pmatrix}
        \Gbf_1^{-1} & 0\\
        0 & \I 
      \end{pmatrix}
      \begin{pmatrix}
        \U_1^T\\
        \hat \X^T
      \end{pmatrix}
      (\C-\hat\sigma'^2\I)
      \begin{pmatrix}
        \U_1 &
        \hat \X
      \end{pmatrix}
      \begin{pmatrix}
        \Gbf_1^{-1} & 0\\
        0 & \I 
      \end{pmatrix}
      \begin{pmatrix}
        \V^T & 0\\
        0 & \I
      \end{pmatrix} \\
      \hat\sigma'^2 &= \frac{\tr(\U_3^T\C\U_3)}{n-d-p}
    \end{align*}
    First note that $\hat\sigma'^2=\hat\sigma^2$, because we can write
    $\U_3=\U_2\W_{\sim p}$, where
    $\W_{\sim p}\in\R^{(n-d)\times(n-d-p)}$ is the matrix with the
    $n-d-p$ last eigenvectors of $\C_{22}$.

    Working out the block matrix product results in:
    \begin{align*}
      \hat\B &= \V \Gbf_1^{-1}\U_1^T (\C-\hat\sigma^2\I) \U_1 \Gbf_1^{-1}\V^T= \V \Gbf_1^{-1} (\C_{11}-\hat\sigma^2\I)  \Gbf_1^{-1}\V^T\\
      \hat\D &= \V \Gbf_1^{-1}\U_1^T \C \hat\X = \V \Gbf_1^{-1} \U_1^T \C\U_2\W_p = \V \Gbf_1^{-1} \C_{12}\W_p\\
      \hat\A' &= \hat\X^T (\C-\hat\sigma^2\I) \hat\X = \W_p^T\U_2^T (\C-\hat\sigma^2\I)  \U_2\W_p = \W_p^T (\C_{22}-\hat\sigma^2\I)\W_p\\
      &= \diag(\lambda_1-\hat\sigma^2,\dots, \lambda_p-\hat\sigma^2)
    \end{align*}
    Hence, also the estimate $\hat\A' =\hat\A$. Because the unconstrained optimization of $\L$ given $\hat\X$ results in the same estimate for $\A$ and $\sigma^2$ as the intial constrained optimization where these parameters were given, it follows that also the estimates of $\B$ and $\D$ must be the same:
    \begin{align*}
      \{\hat\B,\hat\D\}  = \argmin_{\B,\D} \L(\B,\D \mid \hat\X,\hat\A,\hat\sigma^2) = \argmin_{\B,\D} \min_{\A,\sigma^2} \L(\A,\B,\D,\sigma^2\mid \hat\X) .
    \end{align*}
\end{proof}

\subsection{\lvreml\ maximizes the variance explained}
\label{sec:lvreml-maxim-vari}

It is tempting to ask whether the combined solution from Theorems~\ref{thm:lvreml} and \ref{thm:full} optimizes the \emph{total} likelihood among all possible $p$-dimensional sets of latent variables. To address this problem, let $\X\in\R^{n\times p}$ be an arbitrary matrix of latent variables whose columns are normalized, mutually orthogonal and orthogonal to the columns of $\Z$, $\X^T\X=\I$ and $\X^T\Z=0$. Because $\U_2$ is only defined upto a rotation, we can always choose
\begin{align*}
  \U_2 =
  \begin{pmatrix}
    \X & \Q
  \end{pmatrix}
\end{align*}
with $\Q\in\R^{n\times(n-d-p)}$ satisfying $\Q^T\Q=\I$, $\Q^T\X=0$ and $\Q^T\Z=0$. From the proof of Theorem~\ref{thm:full} we immediately obtain:
\begin{prop}\label{prop:covar-ml-X}
  Let $\A(\X)\in\R^{p\times p}$, $\B(\X)\in\R^{d\times d}$, $\D(\X)\in\R^{d\times (n-d)}$ and $\sigma^2(\X)>0$ be the solution of
  \begin{align*}
    \{\A(\X),\B(\X),\D(\X),\sigma^2(\X)\} = \argmin_{\A,\B,\D,\sigma^2} \L(\A,\B,\D,\sigma^2\mid \X).
  \end{align*}
  Then
  \begin{align*}
     \B(\X) &= \V \Gbf_1^{-1} (\C_{11}-\hat\sigma^2\I)  \Gbf_1^{-1}\V^T\\
     \D(\X) &= \V \Gbf_1^{-1}\U_1^T \C \X\\
     \A(\X) &= \X^T (\C-\hat\sigma^2\I)\X\\
    \sigma^2(\X) &= \frac{\tr(\Q^T\C\Q)}{n-d-p}
  \end{align*}
  \qed
\end{prop}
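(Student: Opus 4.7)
The plan is to reduce Proposition~\ref{prop:covar-ml-X} to Theorem~\ref{thm:no-latent} by observing that, because $\X$ is fixed in this statement and we are optimizing only over the (co)variance parameters, the matrix $\X$ plays exactly the role of an additional block of known covariates. I would accordingly form the augmented design $\tilde\Z=(\Z,\X)\in\R^{n\times(d+p)}$, collect the parameters into
\begin{align*}
\tilde\Psi=
\begin{pmatrix}
\B & \D\\
\D^T & \A
\end{pmatrix},
\end{align*}
and note that the covariance $\K=\tilde\Z\,\tilde\Psi\,\tilde\Z^T+\sigma^2\I$ is then of the no-latent-variables form addressed by Theorem~\ref{thm:no-latent}, with $\tilde\Psi$ taking the role of $\B$.

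First I would write down the SVD of $\tilde\Z$. Starting from the thin SVD $\Z=\U_1\Gbf_1\V^T$ and using $\X^T\Z=0$, $\X^T\X=\I$, one obtains
\begin{align*}
\tilde\Z=
\begin{pmatrix}
\U_1 & \X & \Q
\end{pmatrix}
\begin{pmatrix}
\Gbf_1 & 0\\
0 & \I\\
0 & 0
\end{pmatrix}
\begin{pmatrix}
\V^T & 0\\
0 & \I
\end{pmatrix},
\end{align*}
with $\Q\in\R^{n\times(n-d-p)}$ any orthonormal basis of the orthogonal complement of $\mathrm{span}(\Z)\oplus\mathrm{span}(\X)$; the matrix $(\U_1,\X,\Q)$ is then orthogonal. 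This is precisely the SVD that already appears in the proof of Theorem~\ref{thm:full} for the special choice $\hat\X=\U_2\W_p$, so I would leave $\X$ (and hence $\Q$) general and re-run the same bookkeeping.

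Next I would apply Theorem~\ref{thm:no-latent} to $\tilde\Z$. This directly yields $\sigma^2(\X)=\tr(\Q^T\C\Q)/(n-d-p)$ (the average variance in the directions orthogonal to $\tilde\Z$) and a block formula for $\hat{\tilde\Psi}$ of the form
\begin{align*}
\hat{\tilde\Psi}=
\begin{pmatrix}
\V & 0\\ 0 & \I
\end{pmatrix}
\begin{pmatrix}
\Gbf_1^{-1} & 0\\ 0 & \I
\end{pmatrix}
\begin{pmatrix}
\U_1^T\\ \X^T
\end{pmatrix}
\bigl(\C-\sigma^2(\X)\I\bigr)
\begin{pmatrix}
\U_1 & \X
\end{pmatrix}
\begin{pmatrix}
\Gbf_1^{-1} & 0\\ 0 & \I
\end{pmatrix}
\begin{pmatrix}
\V^T & 0\\ 0 & \I
\end{pmatrix}.
\end{align*}
Expanding this $2\times 2$ block product and identifying the $(1,1)$, $(1,2)$ and $(2,2)$ blocks with $\B(\X)$, $\D(\X)$ and $\A(\X)$ respectively delivers the four claimed formulas; in particular the $(1,1)$-block collapses to $\V\Gbf_1^{-1}(\C_{11}-\hat\sigma^2\I)\Gbf_1^{-1}\V^T$, using $\U_1^T\C\U_1=\C_{11}$, and the $(1,2)$-block gives $\V\Gbf_1^{-1}\U_1^T\C\X$ directly.

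There is no genuine obstacle here beyond the block-matrix bookkeeping: the computation is the same one carried out in the proof of Theorem~\ref{thm:full}, but read off at a generic $\X$ rather than at $\hat\X=\U_2\W_p$, which is precisely why the author says the result is obtained ``immediately''. The only point requiring a brief check is the existence hypothesis for Theorem~\ref{thm:no-latent} applied to $\tilde\Z$, i.e.\ that the analogue of condition~\eqref{eq:17} holds for the augmented design; but this will be inherited from~\eqref{eq:17} together with the eigenvalue separation used in Theorem~\ref{thm:lvreml}, so it poses no difficulty for the derivation itself.
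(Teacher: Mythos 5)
Your proposal is correct and takes essentially the same route as the paper: the paper obtains Proposition~\ref{prop:covar-ml-X} by re-reading the proof of Theorem~\ref{thm:full} at a generic $\X$, i.e.\ treating $\tilde\Z=(\Z,\;\X)$ as an augmented known-covariate matrix with SVD built from $(\U_1,\;\X,\;\Q)$ and applying Theorem~\ref{thm:no-latent}, which is exactly your reduction and block bookkeeping. Your closing remark on the existence condition~\eqref{eq:17} for the augmented design is no looser than the paper's own treatment, which leaves that hypothesis implicit.
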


Plugging these values into the negative log-likelihood function results in a function that depends only on $\X$:
\begin{prop}
  Let $\X\in\R^{n\times p}$ be an arbitrary choice of latent variables
  with associated maximum-likelihood estimates for the covariance
  parameters given by Proposition~\ref{prop:covar-ml-X}. Then, upto an additive constant
  \begin{equation}\label{eq:45}
    \L_\X = \log\det\Bigl(\X^T\bigl[\C - \C\U_1(\U_1^T\C\U_1)\U_1^T\C \bigr]\X\Bigr) + (n-d-p)\log\bigl(\hat\sigma^2(\X)\bigr)
  \end{equation}
\end{prop}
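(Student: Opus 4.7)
The plan is to profile the negative log-likelihood $\L=\log\det(\K)+\tr(\K^{-1}\C)$ by substituting the minimizers $\A(\X),\B(\X),\D(\X),\sigma^2(\X)$ from Proposition~\ref{prop:covar-ml-X}, and to evaluate the resulting expression in an orthonormal basis adapted to the geometry of the problem. Since $\X^T\Z=0$ together with the SVD $\Z=\U_1\Gbf_1\V^T$ implies that the columns of $\Z$ and $\X$ span a $(d+p)$-dimensional subspace with orthonormal basis $(\U_1,\X)$, and $\Q$ is by construction orthogonal to both, the triple $(\U_1,\X,\Q)$ yields a natural block decomposition of $\R^n$ in which to rewrite $\hat\K$ and $\C$.

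The first step, and the main computational hurdle, is to work out the block form of $\hat\K$ in this basis after substitution. Direct computation gives
\[
\U_1^T\hat\K\U_1 = \Gbf_1\V^T\hat\B\V\Gbf_1 + \hat\sigma^2\I = \U_1^T\C\U_1,\quad \X^T\hat\K\X = \hat\A + \hat\sigma^2\I = \X^T\C\X,
\]
\[
\U_1^T\hat\K\X = \Gbf_1\V^T\hat\D = \U_1^T\C\X,
\]
while all blocks touching $\Q$ collapse using $\Q^T\Z=\Q^T\X=0$ to $\Q^T\hat\K\Q=\hat\sigma^2(\X)\I_{n-d-p}$ and $\U_1^T\hat\K\Q=\X^T\hat\K\Q=0$. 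Hence $\hat\K$ becomes block diagonal with upper block
\[
\M:=\begin{pmatrix}\U_1^T\C\U_1 & \U_1^T\C\X\\ \X^T\C\U_1 & \X^T\C\X\end{pmatrix},
\]
which is exactly the $(\U_1,\X)$-block of $\C$, and lower block $\hat\sigma^2(\X)\I_{n-d-p}$.

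From here the remaining steps are short. I would read off $\log\det(\hat\K)=\log\det(\M)+(n-d-p)\log\hat\sigma^2(\X)$ and
\[
\tr(\hat\K^{-1}\C) = \tr(\M^{-1}\M) + \hat\sigma^{-2}(\X)\tr(\Q^T\C\Q) = (d+p)+(n-d-p) = n,
\]
using $\hat\sigma^2(\X)(n-d-p)=\tr(\Q^T\C\Q)$. The trace is therefore an $\X$-independent constant and can be absorbed. Applying the Schur-complement identity to $\M$ then gives
\[
\log\det(\M) = \log\det(\U_1^T\C\U_1) + \log\det\bigl(\X^T[\C-\C\U_1(\U_1^T\C\U_1)^{-1}\U_1^T\C]\X\bigr),
\]
whose first summand is again constant in $\X$. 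Collecting the non-constant pieces yields \eqref{eq:45}.

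The hard part is the opening substitution: the profiled covariance $\bigl(\begin{smallmatrix}\hat\B & \hat\D\\ \hat\D^T & \hat\A\end{smallmatrix}\bigr)$ has no obvious structure, and the clean identification of the $(\U_1,\X)$-block of $\hat\K$ with the corresponding block of $\C$ only materialises after the $\Gbf_1$ and $\V$ factors cancel in the sandwiching by $(\Z,\X)$ and the rotation to $(\U_1,\X,\Q)$. Once that cancellation is verified, the rest of the argument is routine $2\times 2$ block algebra followed by a single Schur complement.
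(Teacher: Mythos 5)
Your proposal is correct and follows essentially the same route as the paper's proof: decompose everything in the $(\U_1,\X,\Q)$ basis, observe that the profiled $\hat\K(\X)$ is block diagonal with the $(\U_1,\X)$-block of $\C$ on top and $\hat\sigma^2(\X)\I$ below, note that $\tr(\hat\K^{-1}\C)=n$ is constant, and finish with the Schur-complement determinant identity. The only difference is presentational: you verify the block form of $\hat\K(\X)$ by explicit substitution of the estimates from Proposition~\ref{prop:covar-ml-X} (including the cancellation of the $\Gbf_1$ and $\V$ factors), whereas the paper simply recalls that block form from the earlier theorem, so your write-up is marginally more self-contained on that point.
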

\begin{proof}
  Recall from Theorem~\ref{thm:latent-only} that the
  maximum-likelihood estimate for $\K$ given $\X$ and its associated
  maximum-likelihood parameters estimates is given by
  \begin{align*}
    \hat\K(\X) =
    \begin{pmatrix}
      \U_1^T\C\U_1 & \U_1^T\C\X & 0\\
      \X^T\C\U_1 & \X^T\C\X & 0\\
      0 & 0 & \hat\sigma^2\I  
    \end{pmatrix} 
  \end{align*}
  while the covariance matrix $\C$ can be written as
  \begin{align*}
    \C &=
         \begin{pmatrix}
           \U_1^T\C\U_1 & \U_1^T\C\X & \U_1^T\C\Q\\
           \X^T\C\U_1 & \X^T\C\X & \X^T\C\Q\\
           \Q\C\U_1 & \Q\C\X & \Q^T\C\Q
         \end{pmatrix}
  \end{align*}
  Hence
  \begin{align*}
    \L_\X &= \L\bigl(\hat\K(\X)\bigr) = \log\det\bigl(\hat\K(\X)\bigr) + \tr\bigl(\hat\K(\X)^{-1}\C\bigr)\\
    &= \log\det
      \begin{pmatrix}
        \U_1^T\C\U_1 & \U_1^T\C\X\\
      \X^T\C\U_1 & \X^T\C\X
      \end{pmatrix}
      + (n-d-p)\log(\hat\sigma^2) + (d+p) + \frac{\tr(\Q^T\C\Q)}{\hat\sigma^2}\\
    &= \log\det
      \begin{pmatrix}
        \U_1^T\C\U_1 & \U_1^T\C\X\\
      \X^T\C\U_1 & \X^T\C\X
      \end{pmatrix}
      + (n-d-p)\log(\hat\sigma^2) + (d+p) + (n-d-p)
  \end{align*}
  Using equation~(\ref{eq:42}) for the determinant of a partitioned
  matrix, we have
  \begin{align*}
    \log\det
      \begin{pmatrix}
        \U_1^T\C\U_1 & \U_1^T\C\X\\
      \X^T\C\U_1 & \X^T\C\X
    \end{pmatrix}
    &= \log\det(\U_1^T\C\U_1) + \log\det\bigl(\X^T\C\X - \X^T\C\U_1 (\U_1^T\C\U_1)^{-1}\U_1\C\X\bigr)\\
    &= \log\det(\U_1^T\C\U_1) + \log\det\Bigl(\X^T\bigl[\C - \C\U_1(\U_1^T\C\U_1)^{-1}\U_1^T\C \bigr]\X\Bigr).
  \end{align*}
  Ignoring the constants $\log\det(\U_1^T\C\U_1)$ and $n$ which do not depend on $\X$, we obtain eq.~(\ref{eq:45}).
\end{proof}

Due to the determinant term in eq.~(\ref{eq:45}), it is not clear whether the restricted maximum-likelihood solution $\hat\X$ of Theorem \ref{thm:lvreml} (with its associated maximum-likelihood covariance parameters of Theorem \ref{thm:full}) is the absolute minimizer of $\L_\X$,
\begin{align*}
  \hat\X = \argmin_{\X\in\R^{n\times p},\X^T\X=\I,\X^T\Z=0} \L_\X\quad 
  \text{\large ?}
\end{align*}
However, we do have the following result:
\begin{thm}
  The restricted maximum-likelihood solution $\hat\X$ of Theorem
  \ref{thm:lvreml} is the set of $p$ latent variables that minimizes
  the residual variance among all choices of $p$ latent variables,
  \begin{align*}
    \hat\X = \argmin_{\X\in\R^{n\times p},\X^T\X=\I,\X^T\Z=0} \sigma^2(\X)
  \end{align*}
\end{thm}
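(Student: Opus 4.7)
The plan is to reduce the minimization of $\sigma^2(\X)$ to a classical trace-maximization problem on the reduced $(n-d)$-dimensional subspace orthogonal to $\Z$, and then invoke a Ky Fan / von Neumann-type argument.

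First, I would use Proposition~\ref{prop:covar-ml-X} to write $\sigma^2(\X) = \tr(\Q^T\C\Q)/(n-d-p)$, where $\Q\in\R^{n\times(n-d-p)}$ completes $\X$ into an orthonormal basis of the column space of $\U_2$, i.e.\ $\U_2 = (\X,\;\Q)$ (recall $\U_2$ is defined only up to rotation, so such a $\Q$ exists for any admissible $\X$). The key identity is then
\begin{equation*}
  \X\X^T + \Q\Q^T = \U_2\U_2^T,
\end{equation*}
which, after multiplying by $\C$ and taking the trace, yields
\begin{equation*}
  \tr(\X^T\C\X) + \tr(\Q^T\C\Q) = \tr(\U_2^T\C\U_2) = \tr(\C_{22}).
\end{equation*}
Since $\tr(\C_{22})$ does not depend on $\X$, minimizing $\sigma^2(\X)$ over the admissible set is equivalent to \emph{maximizing} $\tr(\X^T\C\X)$ over $\X\in\R^{n\times p}$ with $\X^T\X=\I$ and $\X^T\Z=0$.

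Next I would change variables. The orthogonality constraint $\X^T\Z=0$ is equivalent to $\X=\U_2\tilde\X$ for some $\tilde\X\in\R^{(n-d)\times p}$, and the orthonormality $\X^T\X=\I$ transfers to $\tilde\X^T\tilde\X=\I$ because the columns of $\U_2$ are orthonormal. In these coordinates
\begin{equation*}
  \tr(\X^T\C\X) = \tr\bigl(\tilde\X^T\U_2^T\C\U_2\tilde\X\bigr) = \tr\bigl(\tilde\X^T\C_{22}\tilde\X\bigr),
\end{equation*}
so we must maximize $\tr(\tilde\X^T\C_{22}\tilde\X)$ over $(n-d)\times p$ matrices with orthonormal columns.

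The main (and only substantive) step is then the trace maximization: by a standard Ky Fan / Courant–Fischer argument (equivalently, by applying Lemma~\ref{lem:von-Neumann} to $\C_{22}$ and the rank-$p$ orthogonal projector $\tilde\X\tilde\X^T$, whose nonzero eigenvalues are all equal to $1$), we have
\begin{equation*}
  \tr\bigl(\tilde\X^T\C_{22}\tilde\X\bigr) \;\leq\; \sum_{j=1}^p \lambda_j,
\end{equation*}
with $\lambda_1\geq\dots\geq\lambda_{n-d}$ the eigenvalues of $\C_{22}$, and with equality if and only if the columns of $\tilde\X$ span the top-$p$ eigenspace of $\C_{22}$. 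Thus the maximizer is $\tilde\X=\W_p$, and pulling back via $\hat\X=\U_2\W_p$ recovers precisely the \lvreml\ solution of Theorem~\ref{thm:lvreml}. This shows $\hat\X$ is the minimizer of $\sigma^2(\X)$. The only mild obstacle is ensuring the trace-maximization equality is sharp on the constraint set, but this is exactly the content of Ky Fan's theorem applied to the symmetric matrix $\C_{22}$, so no additional work is needed.
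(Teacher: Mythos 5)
Your proposal is correct and is essentially the paper's own argument: both start from Proposition~\ref{prop:covar-ml-X}, use the fixed-trace decomposition $\tr(\C_{22})=\tr(\X^T\C\X)+\tr(\Q^T\C\Q)$, and conclude with the Courant--Fischer/Ky Fan extremal characterization of eigenvalue sums for the symmetric matrix $\C_{22}$. The only cosmetic difference is that you maximize $\tr(\X^T\C\X)$ over the $p$-dimensional span of $\X$, whereas the paper directly minimizes the complementary trace $\tr(\Q^T\C\Q)$ over the residual $(n-d-p)$-dimensional subspace --- the same argument read through the duality your trace identity makes explicit.
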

\begin{proof}
  By Proposition~\ref{prop:covar-ml-X} and the arguments leading up to it, we can write
  \begin{align*}
    \tr(\C_{22}) = \tr(\X^T\C\X) + \tr(\Q^T\C\Q^T)
    = \tr\left( (\U^T_2\X)^T\C_{22}(\U_2^T\X)\right) + \tr\left( (\U^T_2\Q)^T\C_{22}(\U_2^T\Q)\right),
  \end{align*}
  where as before $\C_{22}=\U_2^T\C\U_2$ is the restriction of $\C$ to the $(n-d)$-dimensional subspace orthogonal to the $d$ known covariates, and the columns of $\U_2^T\X$ and $\U_2^T\Q$ span mutually orthogonal subspaces within this $(n-d)$-dimensional space. Hence $(n-d-p)\sigma^2(\X)=\tr(\Q^T\C\Q^T)$ is the trace of $\C_{22}$ over the residual $(n-d-p)$-dimensional space orthogonal to the latent variables, within the subspace orthogonal to the $d$ known covariates. By the Courant-Fisher min-max theorem for eigenvalues \cite{horn1985}, the $(n-d-p)$-dimensional subspace of $\R^{n-d}$ with \emph{smallest} trace is the subspace spanned by the eigenvectors of $\C_{22}$ corresponding to its $(n-d-p)$ smallest eigenvalues. By Theorem~\ref{thm:lvreml}, this is exactly the subspace obtained by choosing $\X$ equal to the restricted maximum-likelihood solution $\hat\X$.  \end{proof}

\section{Selecting covariates and the latent dimension}
\label{sec:select-covar-latent}

Two practical problems remain: how to choose the latent variable dimension parameter $p$ and which known covariates to include?

To choose $p$, we will use the following result:
\begin{lem}\label{lem:trace}
  \begin{align*}
    \tr(\C) = \tr(\hat\K) = \tr(\Z\hat\B\Z^T) + \tr(\hat\X\hat\A\hat \X^T) + n\hat\sigma^2 
  \end{align*}
\end{lem}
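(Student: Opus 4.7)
The plan is to verify both equalities by direct computation, exploiting the explicit closed-form expressions for $\hat\B$, $\hat\D$, $\hat\A$, $\hat\X$ and $\hat\sigma^2$ given by Theorems~\ref{thm:lvreml} and \ref{thm:full}, together with the orthogonality $\hat\X^T\Z=0$ guaranteed by Lemma~\ref{lem:ortho}.

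First I would dispose of the second equality, $\tr(\hat\K) = \tr(\Z\hat\B\Z^T) + \tr(\hat\X\hat\A\hat\X^T) + n\hat\sigma^2$. Starting from the definition \eqref{eq:3}, the trace splits additively over the five summands. For the two cross terms I use cyclicity together with the orthogonality condition $\hat\X^T\Z=0$: for instance $\tr(\Z\hat\D\hat\X^T) = \tr(\hat\D\hat\X^T\Z) = 0$, and similarly for $\tr(\hat\X\hat\D^T\Z^T)$. The remaining term $\tr(\hat\sigma^2\I) = n\hat\sigma^2$ is immediate.

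Next I would establish $\tr(\hat\K) = \tr(\C)$ by evaluating each of the three non-trivial traces with the formulas above. Using $\Z=\U_1\Gbf_1\V^T$ and $\hat\B = \V\Gbf_1^{-1}(\C_{11}-\hat\sigma^2\I)\Gbf_1^{-1}\V^T$, cyclicity together with $\U_1^T\U_1=\I_d$ gives $\tr(\Z\hat\B\Z^T) = \tr(\C_{11}) - d\,\hat\sigma^2$. Substituting $\hat\X = \U_2\W_p$ and $\hat\A = \diag(\lambda_1-\hat\sigma^2,\dots,\lambda_p-\hat\sigma^2)$, and using $\U_2^T\U_2=\I_{n-d}$ and $\W_p^T\W_p=\I_p$, I get $\tr(\hat\X\hat\A\hat\X^T) = \sum_{j=1}^p\lambda_j - p\,\hat\sigma^2$. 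Adding the three contributions,
\begin{equation*}
  \tr(\hat\K) = \tr(\C_{11}) + \sum_{j=1}^p \lambda_j + (n-d-p)\hat\sigma^2.
\end{equation*}
Now the definition \eqref{eq:13} of $\hat\sigma^2$ turns $(n-d-p)\hat\sigma^2$ into $\sum_{j=p+1}^{n-d}\lambda_j$, so the last two terms combine into $\sum_{j=1}^{n-d}\lambda_j = \tr(\C_{22})$. Since the block decomposition of the trace gives $\tr(\C) = \tr(\C_{11}) + \tr(\C_{22})$, the identity $\tr(\hat\K) = \tr(\C)$ follows.

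There is no real obstacle here; the proof is almost purely bookkeeping. The only point requiring mild care is making sure that the cross terms involving $\hat\D$ genuinely drop out — this depends crucially on working in the orthogonal parameterization of Lemma~\ref{lem:ortho}, and would fail for a model in which the hidden factors overlap the known confounders.
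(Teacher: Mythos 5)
Your proof is correct and follows essentially the same route as the paper's: plug the closed-form estimates from Theorems~\ref{thm:lvreml} and \ref{thm:full} into the trace, use cyclicity and the orthonormality relations from the SVD of $\Z$ to obtain $\tr(\Z\hat\B\Z^T)=\tr(\C_{11})-d\,\hat\sigma^2$ and $\tr(\hat\X\hat\A\hat\X^T)=\sum_{j=1}^p\lambda_j-p\,\hat\sigma^2$, and then absorb $(n-d-p)\hat\sigma^2$ via eq.~\eqref{eq:13} to recover $\tr(\C_{22})$ and hence $\tr(\C)$. The only difference is that you explicitly check that the cross terms $\tr(\Z\hat\D\hat\X^T)$ and $\tr(\hat\X\hat\D^T\Z^T)$ vanish because $\hat\X^T\Z=0$, a step the paper's proof leaves implicit; that is a welcome clarification rather than a divergence.
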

\begin{proof}
  Use Theorem~\ref{thm:full} to compute
  \begin{align*}
    \tr(\Z\hat\B\Z) &= \tr\Bigl( \U_1\Gbf_1\V^T \bigl[ \V\Gbf_1^{-1} (\C_{11}-\hat\sigma^2\I) \Gbf_1^{-1} \V^T \bigr] \V\Gbf_1 \U_1^T\Bigr)\\
    &= \tr\bigl(\U_1\C_{11} \U_1^T \bigr) - \hat\sigma^2 \tr\bigl(\U_1\U_1^T\bigr)\\
    &= \tr(\C_{11}) - d \hat\sigma^2,
  \end{align*}
  where the last step uses the cyclical property of the trace and the fact that $\U_1^T\U_1=\I_d$. Likewise, we have
  \begin{align*}
    \tr\bigl(\hat\X\hat\A\hat\X\bigr) &= \tr\Bigl( \U_2\W_p \diag(\lambda_1,\dots,\lambda_p) \W_p^T \U_2^T \Bigr) - \hat\sigma^2 \tr\bigl(\U_2 \W_p \W_p^T \U_2^T\bigr)\\
    &= \sum_{j=1}^p \lambda_j - p\hat\sigma^2\\
    &= \sum_{j=1}^{n-d} \lambda_j - (n-d) \hat\sigma^2\\
    &= \tr(\C_{22}) - (n-d) \hat\sigma^2.
  \end{align*}
  Hence
  \begin{align*}
    \tr(\hat\K) = \tr(\Z\hat\B\Z) + \tr(\hat\X\hat\A\hat\X) + n \hat\sigma^2 = \tr(\C_{11}) + \tr(\C_{22}) = \tr(\C)
  \end{align*}
\end{proof}
Because $\C=(\Y\Y^T)/m$, the eigenvalues of $\C$ are (proportional to) the squared singular values of the expression data $\Y$. Hence $\tr(\Z\hat\B\Z)/\tr(\C)$ is the proportion of variation in $\Y$ explained by the known covariates, $\tr(\hat\X\hat\A\hat\X)/\tr(\C)$ the proportion of variation explained by the latent variables, and $n\hat\sigma^2/\tr(\C)$
% \begin{align*}
%   \rho = \frac{n\hat\sigma^2 }{\tr(\C)}
% \end{align*}
is the residual variance.

Our method for determining the number of latent variables lets the user decide \textit{a priori} the minimum amount of variation $\rho$ in the data that should be explained by the known and latent confounders.  It follows that given $\rho$, a ``target'' value for $\sigma^2$ is
\begin{align*}
  \sigma^2(\rho) = \min\Bigl\{ \frac{(1-\rho) \tr(\C)}{n}, \lambda_{\min}(\C_{11}) \Bigr\},
\end{align*}
where the minimum with $\lambda_{\min}(\C_{11})$ is taken to ensure that of condition \eqref{eq:17} remains valid. Because the eigenvalues $\lambda_1,\dots,\lambda_{n-d}$ are sorted, the function
\begin{align*}
 f( p )= \frac1{n-d-p}\sum_{j=p+1}^{n-d} \lambda_j
\end{align*}
increases with decreasing $p$. Hence given $\rho$, we define $\hat p$ as
\begin{align*}
  \hat p = \min \bigl\{p\colon 0\leq p< n-d, \lambda_p>\lambda_{n-d}, f(p)< \sigma^2(\rho)\bigr\},
\end{align*}
that is, we choose $\hat p$ to be the \emph{smallest} number of latent variables that explain \emph{at least} a proporition of variation $\rho$ of $\Y$, while guaranteeing that the conditions for \emph{all} mathematical results derived in this document are valid.

Note that unless all eigenvalues of $\C_{22}$ are identical, $\hat p$ always exists. Once the desired number of latent variables $\hat p$ is defined, the latent factors $\hat{\X}$, the variance parameters $\hat{\A}$, and the residual variance estimate $\hat\sigma^2$ (which will be the largest possible value less than or equal to the target value $\sigma^2(\rho)$) are determined by Theorem~\ref{thm:lvreml}. Once those are determined, the remaining covariance parameters $\hat{\B}$ and $\hat{\D}$ are determined by Theorem~\ref{thm:full}.

A second practical problem occurs when the rank of $\Z$ exceeds the number of samples, such that any subset of $n$ linearly independent covariates explains \emph{all} of the variation in $\Y$. To select a more relevant subset of covariates, we rapidly screen all candidate covariates using the model with a single known covariate (Section~\ref{sec:solut-model-no-latent}) to compute the variance $\hat\beta^2$ explained by that covariate alone (eq.~\eqref{eq:34}).  We then keep only those covariates for which $\hat\beta^2\geq\theta\tr(\C)$, where $\theta>0$ is the second free parameter of the method, namely the minimum amount of variation explained by a known covariate on its own. The selected covariates are ranked according to their value of $\hat\beta^2$, and a linearly independent subset is generated, starting from the covariates with highest $\hat\beta^2$.

\section{Downstream analyses}
\label{sec:downstream-analyses}

The inferred maximum-likelihood hidden factors $\hat\X$ and sample covariance matrix $\hat \K$ are typically used to create a dataset of residuals corrected for spurious sample correlations, to increase the power for detecting eQTLs, or as data-derived endophenotypes \cite{fusi2012joint,stegle2012using}. We briefly review these tasks and how they compare between \lvreml\ and \panama\ hidden factors.

\subsection{Correcting data for spurious sample correlations}

To remove spurious correlations due to the known and latent variance components from the expression data $\Y\in\R^{n\times m}$ (see Section \ref{sec:model}), the residuals $\hat{\mathbf{y}}_i\in\R^n$ for gene $i$ with original data $\mathbf{y}_i$ (a column of $\Y$) are contructed as 
\begin{align*}
  \hat{\mathbf{y}}_i= \hat \K \left(\sigma_{c,i}^2\hat\K + \sigma_{e,i}^2\I\right)^{-1} \hat{\mathbf{y}}_i
\end{align*}
where the variance parameters $\sigma_{c,i}^2$ and $\sigma_{e,i}^2$ are fit separately for each gene $i$ \cite{fusi2012joint}. Hence two solutions for the latent factors that give rise to the same $\hat \K$ (as observed in Section \ref{sec:panama-hidd-fact} for \lvreml\ and \panama) will result in the same residuals.

\subsection{Adjusting for known and latent covariates in  eQTL association analyses}

Two approaches for mapping eQTLs are commonly used in this context. The first approach tests for an association between SNP $\mathbf{s}_j$ and gene $\mathbf{y}_i$ using a mixed model, where the SNP is treated as a fixed effect, constructing likelihood ratio statistics as
\begin{align*}
  \text{LOD}_{i,j} = \log \frac{\N(\mathbf{y}_i \mid \theta \mathbf{s}_j, \sigma_{c,i}^2\hat\K + \sigma_{e,i}^2\I)}{\N(\mathbf{y}_i \mid 0, \sigma_{c,i}^2\hat\K + \sigma_{e,i}^2\I)},
\end{align*} 
where the variance parameters $\sigma_{c,i}^2$ and $\sigma_{e,i}^2$ are fit separately for each gene $i$ \cite{fusi2012joint}. Hence for latent factor solutions that give rise to the same $\hat \K$ the association analyses will again be identical.

The second approach performs a linear regression of a gene's expression data, typically using the corrected data $\hat{\mathbf{y}}_i$, on the SNP genotypes $\mathbf{s}_j$, using the known and inferred factors as covariates \cite{stegle2012using}, that is, a linear model is fit where
\begin{equation}\label{eq:lin-assoc}
  \hat{\mathbf{y}}_i = \beta_{i,j} \mathbf{s}_j + \Z \mathbf{a}_i + \hat\X \mathbf{b}_i + \epsilon_i
\end{equation}
where $\Z$ and $\hat\X$ are the matrices of known and estimated latent factors, respectively, and $\mathbf{a}_i\in\R^d$ and $\mathbf{b}_i\in\R^p$ are their respective regression coefficients.

Since maximum-likelihood solutions for the hidden factors by \lvreml\ and \panama\ differ by a linear combination with the known factors $\Z$ that transforms models with hidden factors orthogonal to $\Z$ to equivalent models with hidden factors overlapping with $\Z$, and vice versa (see Section \ref{sec:orth-known-hidd}), it is clear that the same linear transformation will also result in equivalent linear association models in eq. \eqref{eq:lin-assoc}. Hence this type of analysis will also be equivalent between the hidden factors inferred by both approaches.

\subsection{Mapping the genetic architecture of latent variables}

Inferred latent variables are sometimes treated as endophenotypes whose genetic architecture is of interest. In this case SNPs are identified that are strongly associated with the latent variables. Different solutions for the latent variables will then clearly result in different sets of significantly associated SNPs.

Using the maximum-likelihood \lvreml\ inferred latent variables that are orthogonal to known confounders is advantageous in this context, because
\begin{itemize}
  \item The \lvreml\ latent variables are \emph{uniquely defined}. All other solutions that give rise to the same covariance matrix estimate $\hat{\K}$ can be written as a linear combination of the known covariates and the \lvreml\ covariates (see Section \ref{sec:orth-known-hidd}).
  \item When interpreting associated SNPs, there is no risk of attributing biological meaning to a latent variable that is due to the signal coming from the overlapping known covariates.
\end{itemize}
To remove the dependence of genetic association analyses on the choice of equivalent sets of latent variables, we recommend performing a multi-trait GWAS on the joint set of known and latent confounders. If the standard multivariate association test based on canonical correlation analysis \cite{ferreira2009multivariate} is used, results will again be identical between equivalent choices of latent variables, because together with the known confounders they all span the same linear subspace.

\end{document}